\newtheorem{theorem}{Theorem}
\newcommand{\matt}[1]{\smashtemp{\bm{\mathit{#1}}}}
\newcommand{\vect}[1]{\smashtemp{\vec{#1}}}
\newcommand{\vecTr}[1]{#1^{\!\top}\!\!}
\newcommand{\matTr}[1]{#1^{\!\top}\!}
\newcommand{\smallnorm}[1]{\smashtemp{\left\|\kern-1pt #1\kern-1pt \right\|}}
\newcommand{\expectation}[1]{\smashtemp{\mathbb{E}\!\left[ #1 \right]}}
\newcommand{\smashtemp}[1]{#1}
\newcommand{\smashinline}[1]{\smash{#1}}
\newcommand{\lapinv}[0]{
  \smashtemp{\matt{L}_{\protect\vphantom{G}}^{\protect\sdagger}}
}
\newcommand{\sdagger}[0]{\scaleobj{0.9}{\dagger}}
\newcommand{\transposeU}[0]{
  \raisebox{1pt}{$\scaleto{\top}{5pt}$}
}
\newcommand{\deltalapinv}[0]{
\smashtemp{  \matt{\textnormal{$\Delta$}\!}\lapinv}
}
\newcommand{\deltalapsubgtilde}[0]{
\smashtemp{  \matt{\textnormal{$\Delta$}\!}\lapsubgtilde}
}
\newcommand{\deltalapinvsubgtilde}[0]{
\smashtemp{  \matt{\textnormal{$\Delta$}\!}\lapsubgtildeinv}
}
\newcommand{\deltalapinvsubgtildelift}[0]{
\smashtemp{  \matt{\textnormal{$\Delta$}\!}\lapsubgtildeinvlift}
}
\newcommand{\lapsubg}[0]{
\smashtemp{\matt{L}_{\vphantom{\scriptscriptstyle{\widetilde{G}}}\!\scriptscriptstyle{G}}^{\protect\vphantom{\protect\sdagger}}}
}
\newcommand{\lapsubginv}[0]{
\smashtemp{\matt{L}_{\vphantom{\scriptscriptstyle{\widetilde{G}}}\!\scriptscriptstyle{G}}^{\protect\sdagger}}}
\newcommand{\laphatsubgtilde}[0]{
\widehat{\rule{0ex}{1.5ex}\smash{\lapsubgtilde}}
}
\newcommand{\laphatsubgtildeeq}[0]{
\widehat{\rule{0ex}{1.7ex}\smash{\lapsubgtilde}}
}
\newcommand{\laphatsubgtildeinv}[0]{
\widehat{\rule{0ex}{1.7ex}\smash{\lapsubgtildeinv}}
}
\newcommand{\overtildeell}[0]{
\widetilde{\rule{-0.5ex}{1.3ex}\smash{\ell}}
}
\newcommand{\overtildeellvec}[0]{
\widetilde{\rule{-0.5ex}{1.8ex}\smash{\vect{\ell}}}
}
\newcommand{\lapsubginvcap}[0]{
\smashtemp{ \matt{L}_{\protect\vphantom{\scriptscriptstyle{\widetilde{G}}}\!\scriptscriptstyle{G}}^{\protect\sdagger}}
}
\newcommand{\lapsubgtilde}[0]{
 \smashtemp{\matt{L}_{\!\scriptscriptstyle{\widetilde{G}}}^{\protect\vphantom{\protect\sdagger}}}
}
\newcommand{\lapsubgtildeinv}[0]{
\smashtemp{ \matt{L}_{\!\scriptscriptstyle{\widetilde{G}}}^{\protect\sdagger}}
}
\newcommand{\lapsubgtildelift}[0]{
\smashtemp{ \matt{L}_{\!\scriptscriptstyle{\widetilde{G}\!,\!\,\!\,\textit{l}}}^{\vphantom{\protect\sdagger}}}
}
\newcommand{\lapsubgtildeinvlift}[0]{
 \smashtemp{\matt{L}_{\!\scriptscriptstyle{\widetilde{G}\!,\!\,\!\,\textit{l}}}^{\protect\sdagger}}
}
\newcommand{\vg}[0]{
\smashtemp{V_{\vphantom{\scriptscriptstyle{\widetilde{G}}}\!\!\,\scriptscriptstyle{G}}}
}
\newcommand{\vgsup}[0]{
\smashtemp{V_{\vphantom{\scriptscriptstyle{\widetilde{G}}}\!\!\!\,\scriptscriptstyle{G}}}
}
\newcommand{\vgtilde}[0]{
\smashtemp{V_{\vphantom{\scriptscriptstyle{\wt{G}}}\!\!\,\widetilde{\scriptscriptstyle{G}}}}
}
\newcommand{\we}[0]{
\smashtemp{w_{\!\!\;e}}
}
\newcommand{\bvec}[0]{
\smashtemp{ \vect{b}_{\!\:\!e}^{\vphantom{\transposeU}}}
}
\newcommand{\bvecT}[0]{
\smashtemp{\vect{b}_{\!\:\!e}^{\transposeU}}
}
\newcommand\defineC{\stackrel{\mathclap{\normalfont\tiny\mbox{def}}}{=}}
\newcommand{\wt}[1]{\mathpalette\wthelper{#1}}
\newcommand*\wthelper[2]{%
        \hbox{\dimen@\accentfontxheight#1%
                \accentfontxheight#11.2\dimen@
                $\m@th#1\widetilde{#2}$%
                \accentfontxheight#1\dimen@
        }%
}
\newcommand*\accentfontxheight[1]{%
        \fontdimen5\ifx#1\displaystyle
                \textfont
        \else\ifx#1\textstyle
                \textfont
        \else\ifx#1\scriptstyle
                \scriptfont
        \else
                \scriptscriptfont
        \fi\fi\fi3
}
\theoremstyle{empty}
\title{A Unifying Framework for Spectrum-Preserving\\ Graph Sparsification and Coarsening}
\author{%
  \textbf{Gecia Bravo-Hermsdorff*}\\
  Princeton Neuroscience Institute\\
  Princeton University \\
  Princeton, NJ, 08544, USA\\
  \texttt{geciah@princeton.edu} \\
  \phantom{\thanks{\textit{Both authors contributed equally to this work.}}}
\and
  \textbf{Lee M.~Gunderson*}\\
  Department of Astrophysical Sciences\\
  Princeton University \\
  Princeton, NJ, 08544, USA\\
  \texttt{leeg@princeton.edu} 
}
\begin{document}

\maketitle
\setcounter{footnote}{0}

\begin{abstract}
How might one ``reduce'' a graph? 
That is, generate a smaller graph that preserves the global structure at the expense of discarding local details?  
There has been extensive work on both graph sparsification (removing edges) and graph coarsening (merging nodes, often by edge contraction); however, these operations are currently treated separately.  
Interestingly, for a planar graph, edge deletion corresponds to edge contraction in its planar dual (and more generally, for a graphical matroid and its dual).  
Moreover, with respect to the dynamics induced by the graph Laplacian (e.g., diffusion), deletion and contraction are physical manifestations of two reciprocal limits: edge weights of $0$ and $\infty$, respectively.  
In this work, we provide a unifying framework that captures both of these operations, allowing one to simultaneously sparsify and coarsen a graph while preserving its large-scale structure.  
The limit of infinite edge weight is rarely considered, as many classical notions of graph similarity diverge.  However, its algebraic, geometric, and physical interpretations are reflected in the Laplacian pseudoinverse $\smashinline{\lapinv}$, which remains finite in this limit.  
Motivated by this insight, we provide a probabilistic algorithm that reduces graphs while preserving $\smashinline{\lapinv}$, using an unbiased procedure that minimizes its variance. 
We compare our algorithm with several existing sparsification and coarsening algorithms using real-world datasets, and demonstrate that it more accurately preserves the large-scale structure. 
\end{abstract}

\section{Motivation} 
\label{sec:Motivation}
Many complex structures and phenomena are naturally described as graphs (eg,\footnote{The authors agree with the sentiment of the footnote on page \texttt{xv} of \cite{chazelle2001discrepancy}, viz, omitting superfluous full stops to obtain a more efficient compression of, eg: \textit{videlicet}, \textit{exempli gratia}, etc.} brains, social networks, the internet, etc).  
Indeed, graph-structured data are becoming increasingly relevant to the field of machine learning \cite{bronstein2016geometric, bruna2013spectral, henaff2015deep}.  
These graphs are frequently massive, easily surpassing our working memory, and often the computer's relevant cache \cite{batson2013spectral}. 
It is therefore essential to obtain smaller approximate graphs to allow for more efficient computation. 

Graphs are defined by a set of nodes $V$ and a set of edges \mbox{$\smashinline{E\subseteq V\times V}$} between them, and are often represented as an adjacency matrix $\smashinline{\matt{A}}$ with size \mbox{$\smashinline{|V|\times|V|}$} and density \mbox{$\smashinline{\propto|E|}$}.  Reducing either of these quantities is advantageous: graph ``coarsening'' focuses on the former, aggregating nodes while respecting the overall structure, and graph ``sparsification'' on the latter, preferentially retaining the important edges.

Spectral graph sparsification has revolutionized the field of numerical linear algebra and is used, eg, in algorithms for solving linear systems with symmetric diagonally dominant matrices in nearly-linear time \cite{spielman2008spectral, cohen2014solving} (in contrast to the fastest known algorithm for solving general linear systems, taking \mbox{$\smashinline{\mathcal{O}(n^\omega)}$-time}, where \mbox{$\smashinline{\omega \approx 2.373}$} is the matrix multiplication exponent \cite{legall2014powers}). 

Graph coarsening appears in many computer science and machine learning applications, eg: 
as primitives for graph partitioning \cite{safro2015advanced} and visualization algorithms\footnote{For animated examples using our graph reduction algorithm, see the following link: \\{\color{blue}\href{https://www.youtube.com/playlist?list=PLmfiQcz2q6d3sZutLri4ZAIDLqM_4K1p-}{youtube.com/playlist?list=PLmfiQcz2q6d3sZutLri4ZAIDLqM\_4K1p-}}.\label{footnote_video}} \cite{harel2001afast}; 
as layers in graph convolution networks \cite{bruna2013spectral, simonovsky2017dynamic}; 
for dimensionality reduction and hierarchical representation of graph-structured data \cite{lafon2006diffusion, chen2017harp}; 
and to speed up regularized least square problems on graphs \cite{hirani2015graph}, which arise in a variety of problems such as ranking \cite{negahban2012iterative} and distributed synchronization of clocks \cite{robles2007new}.

A variety of algorithms, with different objectives, have been proposed for both sparsification and coarsening.  
However, a frequently recurring theme is to consider the graph Laplacian \mbox{$\smashinline{\matt{L} = \matt{D} - \matt{A}}$}, where $\smashinline{\matt{D}}$ is the diagonal matrix of node degrees.  
Indeed, it appears in a wide range of applications, eg: 
its spectral properties can be leveraged for graph clustering \cite{fiedler1973algebraic}; 
it can be used to efficiently solve min-cut/max-flow problems \cite{christiano2010electrical}; 
and for undirected, positively weighted graphs (the focus of this paper), it induces a natural quadratic form, which can be used, eg, to smoothly interpolate functions over the nodes \cite{kyng2015algorithms}.
 
Work on spectral graph sparsification focuses on preserving the Laplacian quadratic form \mbox{$\smashinline{\vecTr{\vect{x}}\matt{L}\vect{x}}$}, a popular measure of spectral similarity suggested by Spielman~\&~Teng \cite{spielman2008spectral}.  
A key result in this field is that any dense graph can be sparsified to \mbox{$\smashinline{\mathcal{O}(|V|\log|V|)}$} edges in nearly linear time using a simple probabilistic algorithm \cite{spielman2008graphsparsification}: start with an empty graph, include edges from the original graph with probability proportional to their effective resistance, and appropriately reweight those edges so as to preserve \mbox{$\smashinline{\vecTr{\vect{x}}\matt{L}\vect{x}}$} within a reasonable factor.  

In contrast to the firm theoretical footing of spectral sparsification, work on graph coarsening has not reached a similar maturity; while a variety of spectral coarsening schemes have been recently proposed, algorithms frequently rely on heuristics, and there is arguably no consensus.  
Eg: Jin~\&~Jaja~\cite{jin2018networksummarization} use $k$ eigenvectors of the Laplacian as feature vectors to perform $k$-means clustering of the nodes; 
Purohit~et~al.~\cite{purohit2014fast} aim to minimize the change in the largest eigenvalue of the adjacency matrix; 
and Loukas~\&~Vandergheynst~\cite{loukas2018spectrally} focuses on a ``restricted'' Laplacian quadratic form.

Although recent work has combined sparsification and coarsening \cite{zhao2018nearly}, they used separate algorithmic primitives, essentially analyzing the serial composition of the above algorithms.  
The primary contribution of this work is to provide a unifying probabilistic framework that allows one to simultaneously sparsify and coarsen a graph while preserving its global structure by using a \textit{single} cost function that preserves the Laplacian pseudoinverse $\smashinline{\lapinv}$.

Corollary contributions include: 
\textbf{1)} Identifying the limit of infinite edge weight with edge contraction, highlighting how its algebraic, geometric, and physical interpretations are reflected in $\smashinline{\lapinv}$, which remains finite in this limit (Section~\ref{sec:LaplacianInverse});
\textbf{2)} Offering a way to quantitatively compare the effects of edge deletion and edge contraction (Section~\ref{sec:LaplacianInverse}~and~\ref{sec:GraphReductionFramework});
\textbf{3)} Providing a probabilistic algorithm that reduces graphs while preserving $\smashinline{\lapinv}$, using an unbiased procedure that minimizes its variance (Sections~\ref{sec:GraphReductionFramework}~and~\ref{sec:GraphReductionAlgorithms}); 
\textbf{4)} Proposing a more sensitive measure of spectral similarity of graphs, inspired by the Poincar\'{e} half-plane model of hyperbolic space (Section~\ref{subsec:Results_ComparisonWithCoarsening}); 
and
\textbf{5)} Comparing our algorithm with several existing sparsification and coarsening algorithms using synthetic and real-world datasets, demonstrating that it more accurately preserves the large-scale structure (Section~\ref{sec:Results}).

\section{Why the Laplacian pseudoinverse}
\label{sec:LaplacianInverse}
Many computations over graphs involve solving \mbox{$\smashinline{\matt{L}\vect{x}= \vect{b}}$} for \mbox{$\smashinline{\vect{x}}$} \cite{teng2010thelaplacian}.  
Thus, the algebraically relevant operator is arguably the Laplacian pseudoinverse $\smashinline{\lapinv}$. 
In fact, its connection with random walks has been used to derive useful measures of distances on graphs, such as the well-known effective resistance \cite{chandra1996theelectrical}, and the recently proposed resistance perturbation distance \cite{monnig2016theresistance}.
Moreover, taking the pseudoinverse of $\smashinline{\matt{L}}$ leaves its eigenvectors unchanged, but inverts the nontrivial eigenvalues. 
Thus, as the largest eigenpairs of $\smashinline{\lapinv}$ are associated with global structure, preserving its action will preferentially maintain the overall ``shape'' of the graph (see Appendix Section~\ref{SI:PertubativeAnalysis} for details).
For instance, the Fielder vector \cite{fiedler1973algebraic} (associated with the ``algebraic connectivity'' of a graph) will be preferentially preserved. 
We now discuss in further detail why $\smashinline{\lapinv}$ is well-suited for both graph sparsification and coarsening.  

Attention is often restricted to undirected, positively weighted graphs \cite{cohen2016almost}. 
These graphs have many convenient properties, eg, their Laplacians are positive semidefinite \mbox{($\smashinline{\vecTr{\vect{x}}\matt{L}\vect{x} \geq 0}$)} and have a well-understood kernel and cokernel \mbox{($\smashinline{\matt{L}\vect{1} = \vect{1}^{\transposeU}\!\!\matt{L} = \vect{0}}$)}. 
The edge weights are defined as a mapping \mbox{$\smashinline{W\!\!: E \rightarrow \mathbb{R}_{>0}}$}.  
When the weights represent connection strength, it is generally understood that \mbox{$\smashinline{\we \rightarrow 0}$} is equivalent to removing edge $e$.  
However, the closure of the positive reals has a reciprocal limit, namely \mbox{$\smashinline{\we \rightarrow +\infty}$}.

This limit is rarely considered, as many classical notions of graph similarity diverge.  
This includes the standard notion of spectral similarity, where $\smashinline{\wt{G}}$  
is a $\smashinline{\sigma}$-spectral approximation of $G$ if it preserves the Laplacian quadratic form \mbox{$\smashinline{\vecTr{\vect{x}}\lapsubg\vect{x}}$} to within a factor of $\smashinline{\sigma}$ for all vectors \mbox{$\smashinline{\vect{x} \in \mathbb{R}^{\left|\!\vgsup\!\right|}}$} \cite{spielman2008spectral}. 
Clearly, this limit yields a graph that does not approximate the original for any choice of $\smashinline{\sigma}$: any $\smashinline{\vect{x}}$ with different values for the two nodes joined by the edge with infinite weight now yields an infinite quadratic form.  
This suggests considering only vectors that have the same value for these two nodes, essentially contracting them into a single ``supernode''.  
Algebraically, this interpretation is reflected in $\smashinline{\lapinv}$, which remains finite in this limit: the pair of rows (and columns) corresponding to the contracted nodes become identical (see Appendix Section~\ref{sec:LiftingsIntuition}).

Physically, consider the behavior of the heat equation \mbox{$\smashinline{\partial_t \vect{x} + \matt{L}\vect{x} = \vect{0}}$}: as \mbox{$\smashinline{\we\rightarrow +\infty}$}, the values on the two nodes immediately equilibrate between themselves, and remain tethered for the rest of the evolution.\footnote{In the spirit of another common analogy (edge weights as conductances of a network of resistors), breaking a resistor is equivalent to deleting that edge, while contraction amounts to completely soldering over it.}  
Geometrically, the reciprocal limits of \mbox{$\smashinline{\we\rightarrow0}$} and \mbox{$\smashinline{\we\rightarrow +\infty}$} have dual interpretations: consider a planar graph and its planar dual; edge deletion in one graph corresponds to contraction in the other, and vice versa.  
This naturally extends to nonplanar graphs via their graphical matroids and their duals \cite{oxley2006matroid}.

Finally, while the Laplacian operator is frequently considered in the graph sparsification and coarsening literature, its pseudoinverse also has many important applications in the field of machine learning \cite{ranjan2014incremental}, eg:~online learning over graphs \cite{herbster2005online}; 
similarity prediction of network data \cite{gentile2013online}; 
determining important nodes \cite{van2017pseudoinverse}; 
providing a measure of network robustness to multiple failures \cite{ranjan2013geometry}; 
extending principal component analysis to graphs \cite{saerens2004principal};
and collaborative recommendation systems \cite{pirotte2007random}. 
Hence, graph reduction algorithms that preserve $\smashinline{\lapinv}$ would be useful to the machine learning community.  

\section{Our graph reduction framework}
\label{sec:GraphReductionFramework} 
We now describe our framework for constructing probabilistic algorithms that generate a reduced graph $\smashinline{\wt{G}}$ from an initial graph $\smashinline{G}$, motivated by the following desiderata: 
\textbf{1)} Reduce the number of edges/nodes (Section~\ref{subsec:GraphReductionFramework_ReducingEdgesAndNodes}); 
\textbf{2)}  Preserve $\smashinline{\lapinv}$ in expectation (Section~\ref{subsec:GraphReductionFramework_PreservingLDagger});
and \textbf{3)} Minimize the change in $\smashinline{\lapinv}$ (Section~\ref{subsec:GraphReductionFramework_MinimizingError}).

We first define these goals more formally. 
Then, in Section~\ref{subsec:GraphReductionFramework_CostFunction}, we combine these requirements to define our cost function and derive the optimal probabilistic action (ie, deletion, contraction, or reweight) to perform to an edge.  

\subsection{Reducing edges and nodes}
\label{subsec:GraphReductionFramework_ReducingEdgesAndNodes}
Depending on the application, it might be more important to reduce the number of nodes (eg, coarsening a sparse network) or the number of edges (eg, sparsifying a dense network).  
Let $r$ be the number of prioritized items reduced during a particular iteration.  
When those items are nodes, then \mbox{$\smashinline{r = 0}$} for a deletion, and \mbox{$\smashinline{r = 1}$} for a contraction.  
When those items are edges, then \mbox{$\smashinline{r = 1}$} for a deletion, however \mbox{$\smashinline{r > 1}$} for a contraction is possible: if the contracted edge forms a triangle in the original graph, then the other two edges will become parallel in the reduced graph (see Figure~\ref{fig:ExampleGraphContractionTriangle} in Appendix Section~\ref{sec:LiftingsIntuition}).  
With respect to the Laplacian, this is equivalent to a single edge with weight given by the sum of these now parallel edges.  
Thus, when edge reduction is prioritized, a contraction will have \mbox{$\smashinline{r = 1 + \tau_e}$}, where $\smashinline{\tau_e}$ is the number of triangles in the original graph $G$ in which the contracted edge $e$ participates.  

Note that, even when node reduction is prioritized, the number of edges will also necessarily decrease.  
Conversely, when edge reduction is prioritized, contraction of an edge is also possible, thereby reducing the number of nodes as well.  
For the case of simultaneously sparsifying \textit{and} coarsening a graph, we choose to prioritize edge reduction, although nodes could also be a sensible choice.  

\subsection{Preserving the Laplacian pseudoinverse}
\label{subsec:GraphReductionFramework_PreservingLDagger}
Consider perturbing the weight of a single edge \mbox{$\smashtemp{e=(v_1,v_2)}$} by $\smashtemp{\Delta\!w}$.  
The change in the Laplacian is
\begin{align} 
\lapsubgtilde- \lapsubg &= \Delta\!w \bvec \bvecT, 
\end{align}
where $\smashinline{\lapsubgtilde}$ and $\smashinline{\lapsubg}$ are the perturbed and original Laplacians, respectively, and $\smashinline{\vect{b}_e}$ is the (arbitrarily) signed incidence (column) vector associated with edge $e$, with entries
\begin{align}
\left(b_e\right)_i &= \left\{
\!\!\!\begin{array}{rl}
+1&  i=v_1\\[-2pt]
- 1&  i=v_2\\[-2pt]
 0& \text{otherwise.}
\end{array}
\right.
\label{eq:SignedIncidence}
\end{align} 
The change in $\smashinline{\lapinv}$ is given by the Woodbury matrix identity\footnote{This expression is only officially applicable when the initial and final matrices are full-rank; additional care must be taken when they are not.  
However, for the case of changing the edge weights of a graph Laplacian, the original formula remains unchanged~\cite{riedel1992asherman, meyer1973generalized} (so long as the graph remains connected), provided one uses the definitions in Section~\ref{subsec:GraphReductionFramework_NodeWeightedL} (see also Appendix Sections~\ref{sec:LiftingsIntuition} and \ref{sec:Efficientlycomputingmeandomega}).\label{footnote:Woodbury}} \cite{woodbury1959inverting}:
\begin{align}
\lapsubgtildeinv- \lapsubginv &=- \frac{\Delta\!w}{1 + \Delta\!w {\bvecT}\! \lapsubginv \bvec} \lapsubginv \bvec \bvecT\! \lapsubginv. \label{eq:Woodbury}
\end{align} 
Note that this change can be expressed as a matrix that depends \textit{only} on the choice of edge $e$, multiplied by a scalar term that depends (nonlinearly) on the change to its weight:
\begin{align}
\deltalapinv &= \underbrace{f\!\left( \tfrac{\Delta\!w}{\we} , \we \Omega_e \right)}_{\text{nonlinear scalar}} \quad\! \times \!\!\underbrace{\vphantom{ \left( \tfrac{\Delta\!w}{\we}, \we \Omega_e \right) } \matt{M}_{\!e}}_{\text{constant matrix}}\!\!\!\!\!\!\!\!,
\end{align}
where
\begin{align}
f &= - \frac{\tfrac{\Delta\!w}{\we}}{1 + \tfrac{\Delta\!w}{\we} \we \Omega_e}, \label{eq:f}\\
\matt{M}_{\!e} &= \we\lapsubginv \bvec \bvecT\!  \lapsubginv,\label{eq:MatrixMe}\\
\Omega_e &= \bvecT\!\lapsubginv \bvec.\label{eq:Omegae}
\end{align} 
Hence, if the probabilistic reweight of this edge  is chosen such that \mbox{$\smashinline{\mathbb{E}[ f ] = 0}$}, then we have \mbox{$\smashinline{\mathbb{E}[\lapsubgtildeinv] = \lapsubginv}$}, as desired.  Importantly, $f$ remains finite in the following relevant limits:
\begin{align}
\begin{array}{rll}
\textrm{deletion:} \quad& \tfrac{\Delta\!w}{\we} \rightarrow -1,\quad & f \rightarrow \left(1- \we \Omega_e\right)^{-1}\\[-0pt]
\textrm{contraction:} \quad& \tfrac{\Delta\!w}{\we} \rightarrow +\infty,\quad & f \rightarrow -\left(\we \Omega_e\right)^{-1}.\label{eq:DeletionContractionLimits}
\end{array}
\end{align} 

Note that $f$ diverges when considering deletion of an edge with \mbox{$\smashinline{\we \Omega_e=1}$} (ie, an edge cut). 
Indeed, such an action would disconnect the graph and invalidate the use of equation~\eqref{eq:Woodbury} (see footnote~\ref{footnote:Woodbury}).  
However, this possibility is precluded by the requirement that \mbox{$\smashinline{\mathbb{E}[ f ] = 0}$}. 

\subsection{Minimizing the error}
\label{subsec:GraphReductionFramework_MinimizingError}
Minimizing the magnitude of $\smashinline{\deltalapinv}$ requires a choice of matrix norm, which we take to be the sum of the squares of its entries (ie, the square of the Frobenius norm).  
Our motivation is twofold.  
First, the algebraically convenient fact that the Frobenius norm of a rank one matrix has a simple form, viz,
\begin{equation} 
m_e \equiv \smallnorm{\matt{M}_{\!e}}_{\text{F}} = \we \bvecT\!\lapsubginv \lapsubginv \bvec.\label{eq:ScalarMe}
\end{equation}
Second, the square of this norm behaves as a variance; to the extent that the $\smashinline{\matt{M}_{\!e}\!}$ associated to different edges can be treated as (entrywise) uncorrelated 
one can decompose multiple perturbations as follows: 
\begin{align}
\expectation{\left\|\sum \deltalapinv \right\|_{\text{F}}^2} \approx \sum{\expectation{\left\|\deltalapinv \right\|_{\text{F}}^2 }}, \label{eq:UncorrelatedAssumption}
\end{align}
which allows the \mbox{single-edge} results from Section~\ref{subsec:GraphReductionFramework_CostFunction} to be iteratively applied to our reduction algorithm, which has multiple reductions (Section~\ref{sec:GraphReductionAlgorithms}).    
In Appendix Section~\ref{SI:StudyOfValidityOfUncorrelationAssumption}, we empirically validate this approximation using synthetic and \mbox{real-world} networks, showing that this approximation is either nearly exact or a conservative estimate.  

For subtleties associated with edge contraction (see Appendix Section~\ref{sec:Efficientlycomputingmeandomega}, in particular equation~\eqref{eq:newmedef}).

\subsection{A cost function for spectral graph reduction}
\label{subsec:GraphReductionFramework_CostFunction}
Combining the discussed desiderata, we choose to minimize the following cost function:
\begin{equation}
\mathcal{C} = \expectation{\left\|\deltalapinv \right\|_{\text{F}}^2} -  \beta^2 \expectation{r}, \label{eq:CostFunction}
\end{equation}
subject to
\begin{equation}
\expectation{\deltalapinv} = \matt{0}, \label{eq:Constraint}
\end{equation}
where the parameter $\smashinline{\beta}$ controls the tradeoff between number of prioritized items reduced $r$ and error incurred in $\smashinline{\lapinv}$.  
This cost function naturally arises when minimizing the expected squared error for a given expected amount of reduction (or equivalently maximizing the expected number of reductions for a given expected squared error). 

We desire to minimize this cost function over all possible reduced graphs.  
As, when reducing multiple edges, $\smashinline{\expectation{r}}$ is additive and the expected squared error is empirically additive, we are able to decompose this objective into a sequence of minimizations applied to individual edges.  
Thus, minimization of this cost function for each edge acted upon can be seen as a probabilistic greedy algorithm for minimizing the cost function for the final reduced graph.  

Here, we describe the analytic solution for the optimal action (ie,~probabilistically choosing to delete, contract, or reweight) to be applied to a single edge.  
We provide the solution in \mbox{Figure~\ref{table:regimes}}, and a detailed derivation in Appendix Section~\ref{SI:CostFunctionSolution}.  

For a given edge $e$, the values of $\smashinline{m_e}$, $\smashinline{\we\Omega_e}$, and $\smashinline{\tau_e}$ are fixed, and minimizing the cost function~\eqref{eq:CostFunction} (given~\eqref{eq:Constraint}) results in a \mbox{piecewise} solution with three regimes, depending on the value of $\beta$: 
\textbf{1)} When \mbox{$\smashinline{\beta < \beta_1(m_e, \we \Omega_e, \tau_e) = \textrm{min}(\beta_{1d},\beta_{1c})}$}, $\smashinline{\beta}$ is small compared with the error that would be incurred by acting on this edge, thus it should not be changed; 
\textbf{2)} When \mbox{$\smashinline{\beta > \beta_2(m_e, \we \Omega_e, \tau_e)}$}, $\smashinline{\beta}$ is large for this edge, and the optimal solution is to probabilistically delete or contract this edge \mbox{($p_d + p_c = 1$;} no reweight is required); and 
\textbf{3)} In the intermediate case \mbox{($\smashinline{\beta_1 < \beta < \beta_2}$)}, there are two possibilities, depending on the edge and the choice of prioritized items: if \mbox{$\smashinline{\beta_{1d}<\beta_{1c}}$}, the edge is either deleted or reweighted, and if \mbox{$\smashinline{\beta_{1c}<\beta_{1d}}$}, the edge is either contracted or reweighted.  
\begin{figure}[ht]
\begin{center}
\centerline{\includegraphics[width=1.01\linewidth]{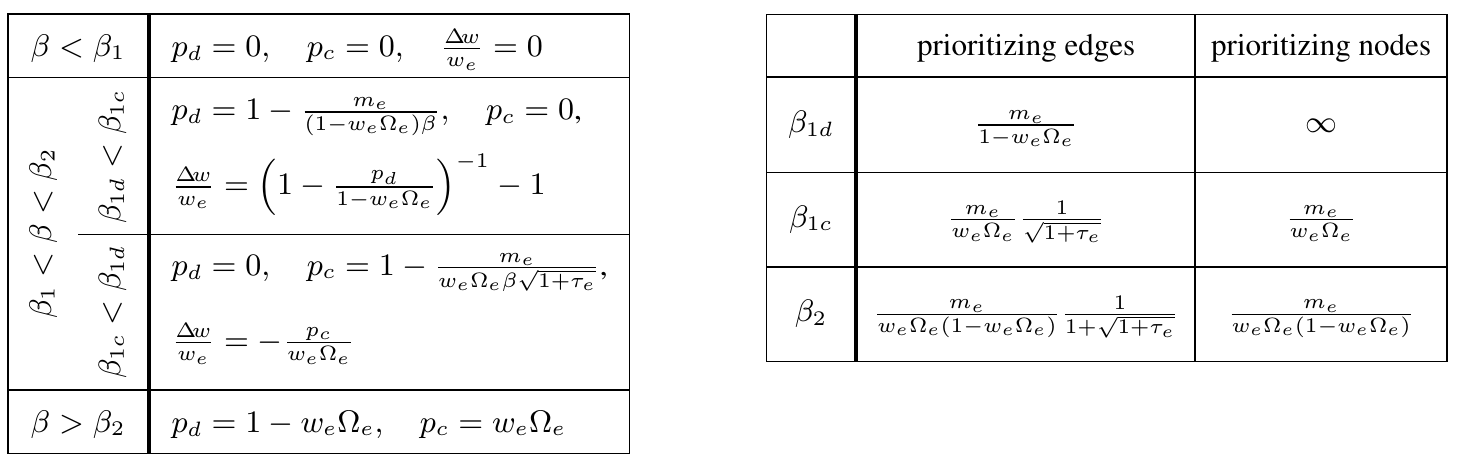}}
\caption{\small{\textit{Left}:~\textbf{Minimizing $\smashinline{\mathcal{C}}$ for a single edge $e$}.  
There are three regimes for the solution, depending on the value of $\smashinline{\beta}$.  
When node reduction is prioritized, set \mbox{$\smashinline{\tau_e=0}$}.  
\textit{Right}:~\textbf{Values of $\smashinline{\beta}$ dividing the three regimes}.  
Note that when edge reduction is prioritized, the number of triangles enters the expressions, and when node reduction is prioritized, there is no deletion in the intermediate regime.  
However, for either choice, both deletion and contraction can have finite probability, and the algorithm does not exclusively reduce one or the other. 
Thus, when simultaneously sparsifying and coarsening a graph, the prioritized items may be chosen to be either edges or nodes. 
We remark that the values of \mbox{$\smashinline{\beta_{1d}}$}, \mbox{$\smashinline{\beta_{1c}}$}, and \mbox{$\smashinline{\beta_{2}}$} might be of independent interest as measures of edge importance for analyzing connections in real-world networks.
}}
\label{table:regimes}
\end{center}
\end{figure}

%
\subsection{Node-weighted Laplacian}
\label{subsec:GraphReductionFramework_NodeWeightedL}
When nodes are merged, one often represents the connectivity of the resulting graph $\smashinline{\wt{G}}$ by a matrix of smaller size.  
To properly compare the spectral properties of $\smashinline{\wt{G}}$ with those of the original graph $\smashinline{G}$, one must keep track of the number of original nodes that comprise these ``supernodes'' and assign them proportional weights.  
The appropriate reduced Laplacian $\smashinline{\lapsubgtilde}$ (of size \mbox{$\smashinline{|\vgtilde| \times |\vgtilde|}$}) is then \mbox{$\smashinline{\matt{W}_{\!\!\!n}^{-1} \matTr{\matt{B}} \matt{W}_{\!\!\!e}^{ } \matt{B}}$}, where the $\smashinline{\matt{W}}$ are the diagonal matrices of the node weights\footnote{$\smashinline{\matt{W}_{\!\!\!n}}$ is often referred to as the ``mass matrix'' \cite{koren2002ace}.  
We note that the use of the random walk matrix \mbox{$\smashinline{\matt{D}^{-1} \matt{L}}$} can be seen as using the node degrees as a surrogate for the node weights.} and the edge weights of $\wt{G}$, respectively, and $\smashinline{\matt{B}}$ is its signed incidence matrix with columns given by~\eqref{eq:SignedIncidence}.

Moreover, one must be careful to choose the appropriate pseudoinverse of $\smashinline{\lapsubgtilde}$, which is given by 
\begin{align}
\lapsubgtildeinv &= \left( \lapsubgtilde + \matt{J} \right)^{-1} - \matt{J},\\
\matt{J} &= \frac{1}{\vect{1}^{\transposeU}\!\vect{w}_n^{ }} \vect{1}\vect{w}_n^\top, 
\end{align}
where $\smashinline{\vect{w}_n^{ } \in \mathbb{R}^{\left|\!\vgsup\!\right|}_{>0}}$ is the vector of node weights.  
Note that $\smashinline{\lapsubgtildeinv \lapsubgtilde = \lapsubgtilde  \lapsubgtildeinv = \matt{I} - \matt{J}}$, the appropriate node-weighted projection matrix.

To compare the action of the original and reduced Laplacians on a vector \mbox{$\smashinline{\vect{x} \in \mathbb{R}^{\left|\!\vgsup\!\right|}_{\vphantom{>0}}}$} over the nodes of the original graph, one must ``lift'' $\smashinline{\lapsubgtilde}$ to operate on the same space as $\smashinline{\lapsubg}$.  
We thus define the mapping from original to coarsened nodes as a \mbox{$\smashinline{|\vgtilde| \times |\vg|}$} matrix $\smashinline{\matt{C}}$, with entries
\begin{equation}
c_{i\!j} = \left\{
\!\!\!\begin{array}{rl}
1& \textrm{node } j \textrm{ in } \textrm{supernode } i\\[-2pt]
 0& \text{otherwise.}
\end{array}
\right.
\end{equation} 
The appropriate lifted Laplacian is \mbox{$\smashinline{\lapsubgtildelift = \matTr{\matt{C}} \lapsubgtilde \matt{W}_{\!\!\!n}^{-1} \matt{C}}$}.  
Likewise, the lifted Laplacian pseudoinverse is \mbox{$\smashinline{\lapsubgtildeinvlift = \matTr{\matt{C}} \lapsubgtildeinv \matt{W}_{\!\!\!n}^{-1} \matt{C}}$} (see Appendix Section~\ref{sec:LiftingsIntuition} for a detailed rationale of these definitions).

\section{Our graph reduction algorithm}
\label{sec:GraphReductionAlgorithms}
Using this framework, we now describe our graph reduction algorithm.  
Similar to many graph coarsening methods \cite{hendrickson1995amultilevel, ron2010relaxation}, we obtain the reduced graph by acting on the initial graph (as opposed to adding edges to an empty graph, as is frequently done in sparsification \cite{kyng2016aframework, lee2017ansdp}).

Care must be taken, however, as simultaneous deletions/contractions may result in undesirable behavior.  
Eg, while any edge that is itself a \mbox{cut-set} will never be deleted (as \mbox{$\smashinline{\we\Omega_e=1}$}), a collection of edges that together make a cut-set might all have finite deletion probability.  
Hence, if multiple edges are simultaneously deleted, the graph could become disconnected.  
In addition, the \mbox{single-edge} analysis could underestimate the change in $\smashinline{\lapinv}$ associated with simultaneous contractions.  
Eg, consider two \mbox{highly-connected} nodes that are each the center of a different community, and a third auxiliary node that happens to be connected to both: contracting the auxiliary node into either of the other two would be sensible, but performing both contractions would merge the two communities.  

Algorithm~\ref{alg:GraphReductionIterative} describes our graph reduction scheme. 
Its inputs are:~$\smashinline{G}$, the original graph; 
\mbox{$\smashinline{q}$}, the fraction of sampled edges to act upon per iteration; 
$\smashinline{d}$, the minimum expected decrease in prioritized items per edge acted upon; 
and $\texttt{StopCriterion}$, a user-defined function. 
With these inputs, we implicitly select $\smashinline{\beta}$.  
Let $\smashinline{\beta_{\star\!,e}}$ be the minimum $\smashinline{\beta}$ such that \mbox{$\smashinline{\expectation{r} \geq d}$} for edge $\smashinline{e}$.  
For each iteration, we compute $\smashinline{\beta_{\star\!,e}}$ for all sampled edges, and choose a $\smashinline{\beta}$ such that a fraction $q$ of them have \mbox{$\smashinline{\beta_{\star\!,e} < \beta}$}. 
We then apply the corresponding probabilistic actions to these edges. 
The appropriate choice of $\texttt{StopCriterion}$ depends on the application.  
Eg, if one desires to bound the accuracy of an algorithm that uses graph reduction as a primitive, limiting the Frobenius error in $\smashinline{\lapinv}$ is a sensible choice (it is trivial to keep a running total of the estimated error, see Appendix Section~\ref{SI:StudyOfValidityOfUncorrelationAssumption}).  
On the other hand, if one would like the reduced graph to be no larger than a certain size, then one can simply continue reducing until this point. 
While both criteria may also be implicitly implemented via an upper bound on $\smashinline{\beta}$, the relationship is nontrivial and depends on the structure of the graph.  
\begin{algorithm}[ht]
   \caption{\texttt{ReduceGraph}}
   \label{alg:GraphReductionIterative}
    \footnotesize{
\begin{algorithmic}[1]
   \STATE {\bfseries Inputs:} graph $G$, fraction of sampled edges to act upon $q$, minimum $\expectation{r}$ per edge acted upon $d$, and a $\texttt{StopCriterion}$
   \STATE Initialize $\wt{G}_{0} \leftarrow G$, $t\leftarrow0$, $\textit{stop} \leftarrow \textrm{False}$
      \WHILE{{\bfseries not} ($\textit{stop}$)}
   \STATE Sample an independent edge set 
   \FOR{(edge $e$) {\bfseries in} $(\textrm{sampled edges})$}
    \STATE Compute $\Omega_e$, $m_e$ (see equations~\eqref{eq:Omegae}~and~\eqref{eq:ScalarMe})
    \STATE Evaluate $\beta_{\star e}$, according to $d$ (see Tables in Figure~\ref{table:regimes}) 
   \ENDFOR
\STATE Choose $\beta$ such that a fraction $q$ of the sampled edges (those with the lowest $\beta_{\star e}$) are acted upon 
      \STATE Probabilistically choose to reweight, delete, or contract these edges
      \STATE Perform reweights and deletions to $\wt{G}_t$
      \STATE Perform contractions to $\wt{G}_t$
      \STATE $\wt{G}_{t+1} \leftarrow \wt{G}_{t}, \; t \leftarrow t + 1$
      \STATE $\textit{stop} \leftarrow \texttt{StopCriterion}(\wt{G}_t)$
   \ENDWHILE
   \STATE {\bfseries return} reduced graph $\wt{G}_{t}$
   \end{algorithmic}}
\end{algorithm}

The aforementioned problems associated with simultaneous deletions/contractions can be eliminated by taking a conservative approach: acting on only a single edge per iteration. 
However, this results in an algorithm that does not scale favorably for large graphs.  
A more scalable solution involves carefully sampling the candidate set of edges.  
In particular, we are able to significantly ameliorate these issues by sampling the candidate edges such that they do not have any nodes in common (ie, the sampled edges form an independent edge set).  
Not only does this eliminate the possibility of ``accidental'' contractions, but, empirically, it also suppresses the occurrence of graph disconnections (the small fraction that become disconnected are restarted).  
At each iteration, our algorithm finds a random maximal independent edge set in $\smashinline{\mathcal{O}(|V|)}$ time using a simple greedy algorithm.\footnote{Specifically, randomly permute the nodes, and sequentially pair them with a random available neighbor (if there is one).  
The obtained set contains at least half as many edges as the maximum matching \cite{ausiello2012complexity}.}
In practice, the size of such a set scales as $\smashinline{\mathcal{O}(|V|)}$ (although it is easy to find families for which this scaling does not hold, eg, star graphs).   
Our algorithm then computes the $\smashinline{\Omega_e}$ and $\smashinline{m_e}$ of these sampled edges, 
and acts on the fraction $q$ with the lowest $\beta_{\star e}$.

The main computational bottleneck of our algorithm is computing $\smashinline{\Omega_e}$ and $m_e$ (equation~\eqref{eq:ScalarMe}).  
However, we can draw on the work of \cite{spielman2008graphsparsification}, which describes a method for efficiently computing $\smashinline{\varepsilon}$-approximate values of $\smashinline{\Omega_e}$ for all edges, requiring \mbox{$\smashinline{\wt{\mathcal{O}}(|E| \log |V| / \epsilon^2)}$} time.  
With minimal changes, this procedure can also be used to compute approximate values of $\smashinline{m_e}$ with similar efficiency (in Appendix Section~\ref{sec:Efficientlycomputingmeandomega}, we discuss the details of how to efficiently compute approximations of $\smashinline{m_e}$).  
As we must compute these quantities for each iteration, we multiply the running time by the expected number of iterations, \mbox{$\smashinline{\mathcal{O}(|E|/qd|V|)}$}.   
Empirically, we find that one is able to set \mbox{$\smashinline{q \sim 1/16}$} and \mbox{$\smashinline{d \sim 1/4}$} with minimal loss in reduction quality (see Appendix Section~\ref{sec:VaryingParams_AlgorithmParameterQ}).   
Thus, we expect that our algorithm could have a running time of \mbox{$\smashinline{\wt{\mathcal{O}}(\langle k \rangle |E|)}$}, where \mbox{$\smashinline{\langle k \rangle}$} is the average degree.  
However, in the following results, we have used a naive implementation: computing $\smashinline{\lapinv}$ at the onset, and updating it using the Woodbury matrix identity.

\section{Experimental results}
\label{sec:Results}
In this section, we empirically validate our framework and compare it with existing algorithms.  
We consider two cases of our general framework, namely graph sparsification (excluding regimes involving edge contraction), and graph coarsening (prioritizing reduction of nodes).  
In addition, as graph reduction is often used in graph visualization, we generated videos of our algorithm simultaneously sparsifying and coarsening several \mbox{real-world} datasets (see footnote~\ref{footnote_video} and Appendix Section~\ref{sec:Visualization}).

\subsection{Hyperbolic interlude}
\label{subsec:Results_HyperbolicDistance}
When comparing a graph $\smashinline{G}$ with its reduced approximation $\smashinline{\wt{G}}$, it is natural to consider how relevant linear operators treat the same input vector.  
If the vector $\smashinline{\lapsubgtildelift  \vect{x}}$ is aligned with $\smashinline{\lapsubg \vect{x}}$, the fractional error in the quadratic form $\smashinline{\vect{x}^{\!\top}\!\!\matt{L}\vect{x}}$ is a natural quantity to consider, as it corresponds to the relative change in the magnitude of these vectors.  
However, it is not so clear how to compare output vectors that have an angular difference.  
Here, we describe a natural extension of this notion of fractional error, which draws intuition from the Poincar\'{e} \mbox{half-plane} model of hyperbolic geometry.  
In particular, we choose the boundary of the \mbox{half-plane} to be perpendicular to $\smashinline{\vect{x}}$ and compute the geodesic distance between $\smashinline{\lapsubg \vect{x}}$ and $\smashinline{\lapsubgtildelift \vect{x}}$, viz,
\begin{align}
d_{\vect{x}}\!\left(\matt{L}_0, \matt{L}_1 \right)\defineC \textrm{arccosh}\!\left( \!1+ \frac{ \big\lVert (\matt{L}_0 - \matt{L}_1)\vect{x} \big\rVert_2^2 \, \big\lVert \vect{x} \big\rVert_2^2}{ 2 \big(\vect{x}^{\!\top}\!\!\matt{L}_0\vect{x}\big) \big( \vect{x}^{\!\top}\!\!\matt{L}_1\vect{x} \big) } \right),
\label{eq:DistanceH}
\end{align} 
where $\smashinline{\matt{L}_0}$ and $\smashinline{\matt{L}_1}$ are positive definite matrices (for now).  

We define the hyperbolic distance between these matrices as 
\begin{align} 
d_{h}\!\left(\matt{L}_0, \matt{L}_1 \right) \defineC \sup\limits_{\vect{x}} d_{\vect{x}}\!\left(\matt{L}_0, \matt{L}_1 \right). \label{Eq:TrueHyperbolicDistance}
\end{align} 
This dimensionless quantity inherits the following standard desirable features of a distance: 
symmetry and \mbox{non-negativity}, \mbox{$\smashinline{d_{h}\!\left(\matt{L}_0, \matt{L}_1 \right) = d_{h}\!\left(\matt{L}_1, \matt{L}_0 \right) \geq 0}$}; 
identity of indiscernibles, \mbox{$\smashinline{d_{h}\!\left(\matt{L}_0, \matt{L}_1 \right) = 0 \Longleftrightarrow \matt{L}_0 = \matt{L}_1}$}; 
and subadditivity, \mbox{$\smashinline{d_{h}\!\left(\matt{L}_0, \matt{L}_2 \right) \leq d_{h}\!\left(\matt{L}_0, \matt{L}_1 \right) + d_{h}\!\left(\matt{L}_1, \matt{L}_2 \right)}$}.  
In addition, we note that \mbox{$\smashinline{d_{h}\!\left(c\matt{L}_0, c\matt{L}_1 \right) = d_{h}\!\left(\matt{L}_0, \matt{L}_1 \right)\: \forall c \in \mathbb{R}\backslash\{0\}}$}, emphasizing its interpretation as a fractional error.

This notion naturally extends to (positive semidefinite) graph Laplacians if one considers only vectors $\vect{x}$ that are orthogonal to their kernels (ie, require that \mbox{$\vect{1}^{\transposeU} \vect{x}=0$} when taking the supremum in \eqref{Eq:TrueHyperbolicDistance}). 
With this modification, the connection with the spectral graph sparsification can be stated as follows: 
{
\renewcommand{\thetheorem}{\ref{ourfirsttheoreminapaperlol}}
\begin{theorem}
If \mbox{$\smashinline{d_{h}\big(\lapsubg, \lapsubgtilde \big)  \leq \ln (\sigma)}$}, then $\smashinline{\wt{G}}$ is a \mbox{$\smashinline{\sigma}$-spectral} approximation of $\smashinline{G}$. 
\end{theorem}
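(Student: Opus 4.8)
The plan is to reduce the hyperbolic bound to the usual two-sided multiplicative bound on quadratic forms by working with one test vector at a time and simply discarding the ``angular'' part of the operator discrepancy. Recall that $\smashinline{\wt G}$ being a $\smashinline{\sigma}$-spectral approximation of $\smashinline{G}$ means
\begin{align}
\sigma^{-1}\,\vecTr{\vect x}\lapsubg\vect x \;\le\; \vecTr{\vect x}\lapsubgtilde\vect x \;\le\; \sigma\,\vecTr{\vect x}\lapsubg\vect x
\end{align}
for every $\smashinline{\vect x}$ with $\smashinline{\vect 1^{\transposeU}\vect x = 0}$. Since both $\smashinline{G}$ and $\smashinline{\wt G}$ are connected (as maintained throughout, cf.\ footnote~\ref{footnote:Woodbury}), both quadratic forms are strictly positive on this subspace, so every ratio and reciprocal written below is well-defined.

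Fix such a vector $\smashinline{\vect x}$ and abbreviate $\smashinline{a_0 = \vecTr{\vect x}\lapsubg\vect x > 0}$ and $\smashinline{a_1 = \vecTr{\vect x}\lapsubgtilde\vect x > 0}$. The first step is the elementary observation that the squared length of $\smashinline{(\lapsubg - \lapsubgtilde)\vect x}$ is at least the square of its component along $\smashinline{\vect x}$: writing $\smashinline{(\lapsubg - \lapsubgtilde)\vect x = c\,\vect x + \vect r}$ with $\smashinline{\vect r \perp \vect x}$, one has $\smashinline{c = \vecTr{\vect x}(\lapsubg - \lapsubgtilde)\vect x / \lVert\vect x\rVert_2^2 = (a_0 - a_1)/\lVert\vect x\rVert_2^2}$ and hence
\begin{align}
\big\lVert (\lapsubg - \lapsubgtilde)\vect x \big\rVert_2^2 \;\ge\; c^2 \lVert \vect x\rVert_2^2 \;=\; \frac{(a_0 - a_1)^2}{\lVert \vect x\rVert_2^2}.
\end{align}
Plugging this into the definition~\eqref{eq:DistanceH} of $\smashinline{d_{\vect x}}$ and using that $\textrm{arccosh}$ is increasing on $\smashinline{[1,\infty)}$ gives $\smashinline{d_{\vect x}(\lapsubg,\lapsubgtilde) \ge \textrm{arccosh}\big(1 + (a_0 - a_1)^2 / (2 a_0 a_1)\big)}$.

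The second step is the algebraic identity that converts this into a log-ratio: $\smashinline{1 + (a_0 - a_1)^2/(2 a_0 a_1) = \tfrac12\big(a_0/a_1 + a_1/a_0\big) = \cosh\big(\ln(a_0/a_1)\big)}$, whence $\smashinline{d_{\vect x}(\lapsubg,\lapsubgtilde) \ge \big\lvert \ln(a_0/a_1)\big\rvert}$. Combining with the hypothesis, $\smashinline{\ln\sigma \ge d_h(\lapsubg,\lapsubgtilde) = \sup_{\vect y} d_{\vect y}(\lapsubg,\lapsubgtilde) \ge d_{\vect x}(\lapsubg,\lapsubgtilde) \ge \big\lvert \ln(a_0/a_1)\big\rvert}$, which rearranges to $\smashinline{\sigma^{-1} a_0 \le a_1 \le \sigma a_0}$. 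As $\smashinline{\vect x}$ ranged over all vectors orthogonal to $\smashinline{\vect 1}$, this is precisely the $\smashinline{\sigma}$-spectral approximation guarantee. I do not anticipate a real obstacle here; the only content is recognizing that the half-plane geodesic, once projected onto the ``height'' axis aligned with $\smashinline{\vect x}$, already dominates the pure fractional error of the quadratic form, and the only points needing care are positivity of $\smashinline{a_0,a_1}$ on $\smashinline{\vect 1^\perp}$ and the monotonicity of $\textrm{arccosh}$. It is worth remarking that the bound is tight whenever $\smashinline{\lapsubg\vect x}$ and $\smashinline{\lapsubgtilde\vect x}$ are parallel, so no slack is introduced in the sparsification regime for which this theorem is intended.
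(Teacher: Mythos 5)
Your proposal is correct and follows essentially the same route as the paper's proof: project the discrepancy onto the direction of $\vect{x}$, discard the perpendicular component to lower-bound the arccosh argument, and recognize $1+(a_0-a_1)^2/(2a_0a_1)=\cosh\big(\ln(a_0/a_1)\big)$ so that $d_{\vect{x}}\ge\big|\ln(a_0/a_1)\big|$. The only cosmetic differences are that you decompose $(\lapsubg-\lapsubgtilde)\vect{x}$ directly rather than each of $\lapsubg\vect{x}$ and $\lapsubgtilde\vect{x}$ separately, and you invoke the $\cosh$--$\ln$ identity where the paper expands $\textrm{arccosh}$ as a logarithm.
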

\addtocounter{theorem}{-1}
}
Here, the notion of \mbox{$\smashinline{\sigma}$-spectral} approximation is the same as in \mbox{Spielman}~\&~Teng \cite{spielman2008spectral} (see Section~\ref{sec:LaplacianInverse}), and thus is restricted to sparsification only. 
The proof is provided in Appendix Section~\ref{sec:hyperbolicdistanceproof}.

As $\smashinline{d_{\vect{x}}}$ is analogous to the ratio of quadratic forms with $\vect{x}$, $\smashinline{d_{h}}$ is likewise analogous to the notion of a $\sigma$-spectral approximation.  
Moreover, as $\smashinline{d_{\vect{x}}}$ and $\smashinline{d_{h}}$ also consider angular differences between $\smashinline{\lapsubg \vect{x}}$ and $\smashinline{\lapsubgtildelift \vect{x}}$, they serve as more sensitive measures of graph similarity.  

In the following sections, we compare our algorithm with other graph reduction methods using $\smashinline{d_{\vect{x}}}$, where we choose $\vect{x}$ to be eigenvectors of the original graph Laplacian.  
In Appendix~Section~\ref{sec:StandardMetricComparisons}, we replicate our results using more standard measures (eg, quadratic forms and eigenvalues). 


\subsection{Comparison with spectral graph sparsification}
\label{subsec:Results_ComparisonWithSparsification}
Figure~\ref{fig:SpielmanComparison} compares our algorithm (prioritizing edge reduction, and excluding the possibility of contraction) with the standard spectral sparsification algorithm of Spielman~\&~Srivastava \cite{spielman2008graphsparsification} using three real-world datasets.  
We choose to compare with this particular sparsification method because it directly aims to optimally preserve the Laplacian.  
To the best of our knowledge, other sparsification methods either do not explicitly preserve properties associated with the Laplacian \cite{satuluri2011local, ahn2012graph}, or share the same spirit as Spielman~\&~Srivastava's algorithm \cite{fung2019general} (often considering other settings, such as distributed \cite{koutis2016simple} or streaming \cite{kapralov2017single} computation).  
The results in Figure~\ref{fig:SpielmanComparison} show that our algorithm better preserves $\lapinv$ and preferentially preserves its action on eigenvectors associated with global structure.  

\subsection{Comparison with graph coarsening algorithms}
\label{subsec:Results_ComparisonWithCoarsening}
Figure~\ref{fig:coarsening} compares our algorithm (prioritizing node reduction) with several existing coarsening algorithms using three more real-world datasets.   
In order to make a fair comparison with these existing methods, after contracting their prescribed groups of nodes, we appropriately lift the resulting reduced \smashinline{$\lapsubgtildeinv$} (see Appendix Section~\ref{sec:LiftingsIntuition}).
We find that our algorithm more accurately preserves global structure.

\section{Conclusion} 
\label{sec:Conclusion}
In this work, we unify spectral graph sparsification and coarsening through the use of a single cost function that preserves the Laplacian pseudoinverse $\smashinline{\lapinv}$.  
We describe a probabilistic algorithm for graph reduction that employs edge deletion, contraction, and reweighting to keep $\smashinline{\mathbb{E}\big[ \lapsubgtildeinv \big] = \lapsubginv}$, and uses a new measure of edge importance ($\smashinline{\beta_{\star}}$) to minimize its variance.  
Using synthetic and real-world datasets, we demonstrate that our algorithm more accurately preserves global structure compared to existing algorithms.  
We hope that our framework (or some perturbation of it) will serve as a useful tool for graph algorithms, numerical linear algebra, and machine learning.  

\begin{figure}[h]
\begin{center}
\centerline{\includegraphics[width=1.01\linewidth]{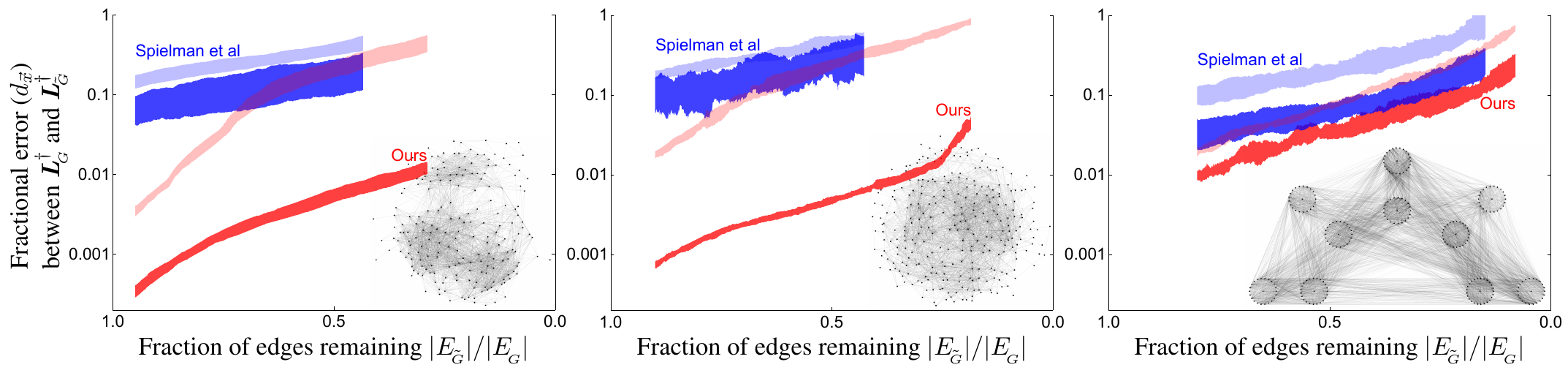}}
\caption{\textbf{Our sparsification algorithm preferentially preserves global structure}.  
We applied our algorithm without contraction (\textbf{\textcolor{red}{Ours}}) and compare with that of Spielman~\&~Srivastava \cite{spielman2008graphsparsification} (\textbf{\textcolor{blue}{Spielman et al}}) using three datasets: 
\textit{Left}: a collaboration network of Jazz musicians (198 nodes and 2742 edges) from \cite{gleiser2003community};  
\textit{Middle}: the \textit{C.~elegans} posterior nervous system connectome (269 nodes and 2902 edges) from \cite{jarrell2012connectome}; 
and \textit{Right}: a weighted social network of face-to-face interactions between primary school students, with initial edge weights proportional to the number of interactions between pairs of students (236 nodes and 5899 edges) from \cite{stehle2011high}.  
For the two algorithms, we compute the hyperbolic distance $\smashinline{d_{\vect{x}}}$ (fractional error) between $\smashinline{\lapsubginvcap \vect{x}}$ and $\smashinline{\lapsubgtildeinv \vect{x}}$  at different levels of sparsification for two choices of $\smashinline{\vect{x}}$: the smallest non-trivial eigenvector of the original Laplacian (\textit{dark shading}), which is associated with global structure; and the median eigenvector (\textit{light shading}).
Shading denotes one standard deviation about the mean for 16 runs of the algorithms.  
The curves end at the minimum edge density for which the sparsified graph is connected.  
}
\label{fig:SpielmanComparison}
\end{center}
\end{figure}

\begin{figure}[h]
\begin{center}
\centerline{\includegraphics[width=1.01\linewidth]{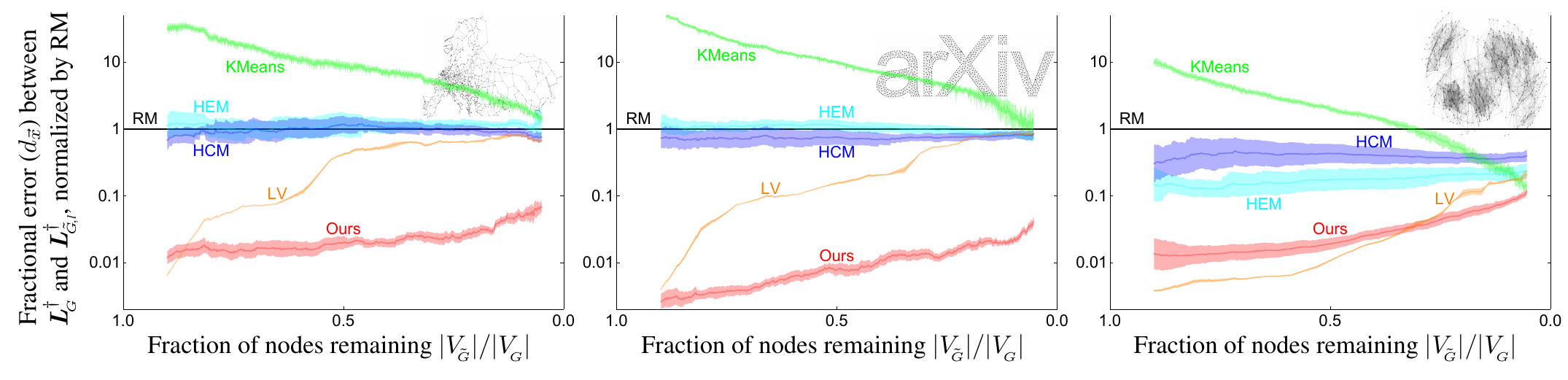}}
\caption{
\small{\textbf{Our algorithm preserves global structure more accurately than other coarsening algorithms}.  
We compare our algorithm (prioritizing node reduction) (\textbf{\textcolor{red}{Ours}}) to several existing coarsening algorithms: two classical methods for graph coarsening (heavy-edge matching (\textbf{\textcolor{cyan}{HEM}}) \cite{karypis1998fast} and heavy-clique matching (\textbf{\textcolor{blue}{HCM}}) \cite{karypis1998fast}), 
and two recently proposed spectral coarsening algorithms (local variation by Loukas \cite{loukas2018graphreduction} (\textbf{\textcolor{orange}{LV}}) and the $k$-means method by Jin~\&~Jaja~\cite{jin2018networksummarization} (\textbf{\textcolor{green}{KMeans}})). 
We ran the comparisons using three datasets:
\textit{Left}: a transportation network of European cities and roads between them (1039 nodes and 1305 edges) from \cite{subelj2011robust}; 
\textit{Middle}: a triangular mesh of the text ``arXiv'' (902 nodes and 2203 edges);  
and \textit{Right}: a weighted social network of face-to-face interactions during an exhibition on infectious diseases, with initial edge weights proportional to the number of interactions between pairs of people (410 nodes and 2765 edges) from \cite{isella2011whats}.  
For all algorithms considered, we compute the hyperbolic distance $\smashinline{d_{\vect{x}}}$ (fractional error) between $\smashinline{\lapsubginvcap \vect{x}}$ and $\smashinline{\lapsubgtildeinvlift \vect{x}}$, where $\smashinline{\vect{x}}$ is the smallest non-trivial eigenvector of the original Laplacian (associated with global structure). 
To provide a baseline, we plot their mean fractional error normalized by that obtained by random matching (\textbf{RM}) \cite{karypis1998fast} for the same level of coarsening.  
Shading denotes one standard deviation about the mean for 16 runs of the algorithms.  
}}
\label{fig:coarsening}
\end{center}
\end{figure}

\newpage
\subsection*{Acknowledgments}
\small{
We would like to thank \mbox{Matthew~de~Courcy-Ireland} for insightful discussions\\ and Ashlyn~Maria~Bravo~Gundermsdorff for unique perspectives. 
}

\bibliographystyle{naturemag}
{\footnotesize \ \bibliography{graphreduction_NeurIPS_vFinal} }


\newpage

\appendix

\renewcommand{\thefigure}{SI~\arabic{figure}}
\setcounter{figure}{0}
\setcounter{footnote}{0}

\begin{center}
\huge{\textbf{Appendix}} 
\end{center}
\vspace{10pt}

\section{Empirical validation of the approximation in equation~\eqref{eq:UncorrelatedAssumption}}
\label{SI:StudyOfValidityOfUncorrelationAssumption}
In order to derive our graph reduction algorithm, we assume that the entries of the $\smashinline{\matt{M}_{\!e}}$ associated to different edges are approximately entrywise uncorrelated (Main Text Section~\ref{subsec:GraphReductionFramework_MinimizingError}).  
Similar to how the variance of the sum of independent random variables is the sum of their individual variances, this assumption allows us to approximate the expected squared Frobenius error of the final reduced graph $\smash{\mathbb{E}\big[\big\|\lapsubgtildeinv - \lapsubginv\big\|_\text{F}^2\big]}$ as a sum over the sequence of probabilistic actions to individual edges:  
\begin{equation}
\underbrace{\expectation{\left\|\sum \deltalapinv \right\|_{\text{F}}^2}}_{\text{true error}} \approx \underbrace{\vphantom{\expectation{\left\|\sum \deltalapinv \right\|_{\text{F}}^2}}\sum{\expectation{\left\|\deltalapinv \right\|_{\text{F}}^2 }}}_{\text{estimated error}}. \label{eq:approximationappendix}
\end{equation} 

In Figure~\ref{fig:VerifyingAssumption}, we empirically validate this assumption for networks with a variety of structures. 
In fact, the true error is statistically equal to or less than the estimated error. 
Thus, the estimated error may be used by \texttt{StopCriterion} in Algorithm~\ref{alg:GraphReductionIterative}. 
\begin{figure}[H]
\begin{center}
\centerline{\includegraphics[width=1.01\linewidth]{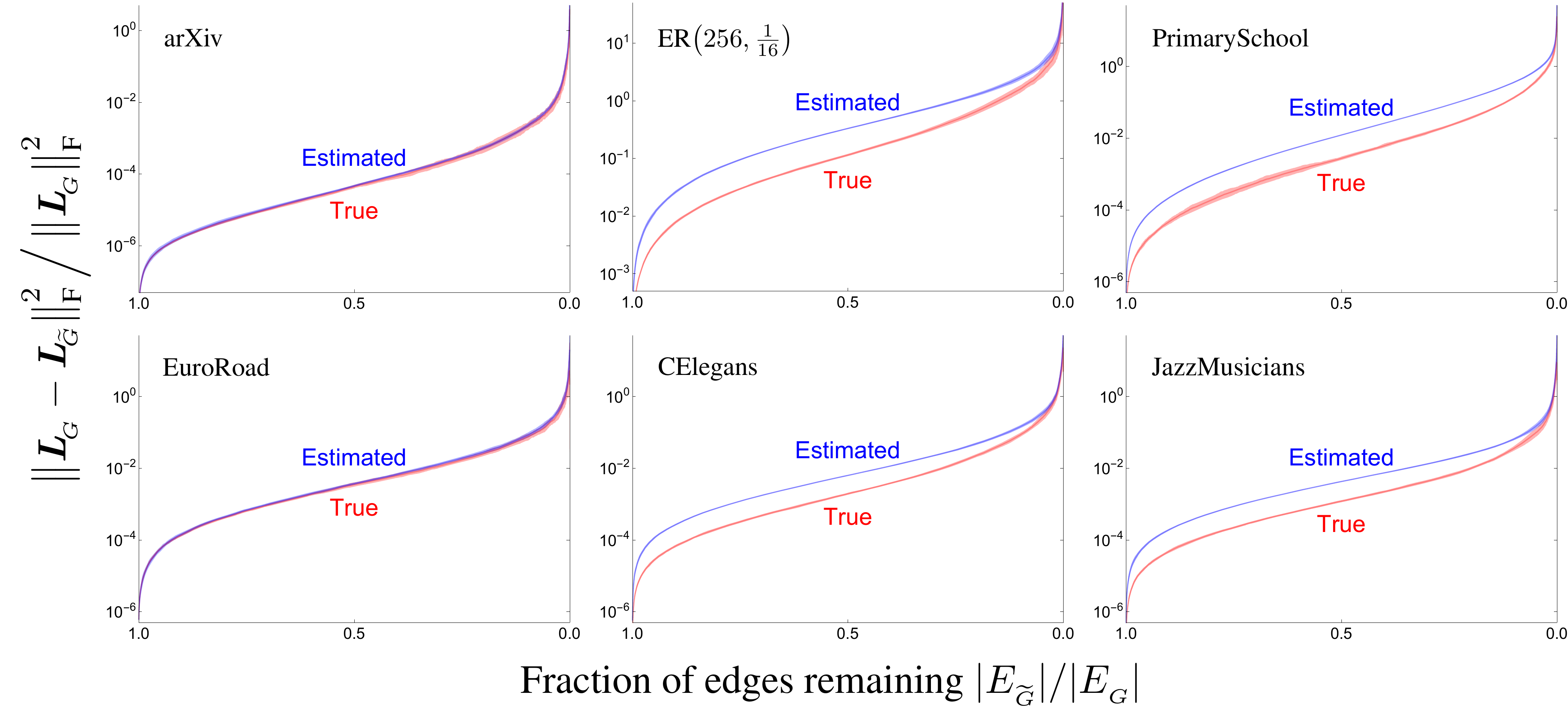}}
\caption{\small{
\textbf{The approximation of uncorrelated changes to $\protect\smashinline{\lapinv}$ is nearly exact or a conservative estimate}.  
We test the validity of equation~\eqref{eq:approximationappendix} using a variety of datasets:  
\textit{Top left}: a triangular mesh of the text ``arXiv'' (902 nodes and 2203 edges);  
\textit{Top middle}: an \mbox{Erd\H{o}s--R\'enyi} model (256 nodes and \mbox{$\smashinline{p=1/16}$});  
\textit{Top right}: a weighted social network of face-to-face interactions between primary school students, with initial edge weights proportional to the number of interactions between pairs of students (236 nodes and 5899 edges) from \cite{stehle2011high};  
\textit{Bottom left}: a transportation network of European cities and roads between them (1039 nodes and 1305 edges) from \cite{subelj2011robust};  
\textit{Bottom middle}: the \mbox{\textit{C.~elegans}} posterior nervous system connectome (269 nodes and 2902 edges) from \cite{jarrell2012connectome}; and 
\textit{Bottom right}: a collaboration network of Jazz musicians (198 nodes and 2742 edges) from \cite{gleiser2003community}.  
We applied Algorithm~\ref{alg:GraphReductionIterative}, prioritizing edge reduction (allowing for deletion, contraction, and reweighting), and setting \mbox{$\smashinline{q=1/16}$} and \mbox{$\smashinline{d=1/4}$}. 
We recorded the estimated error and the true error in $\protect\smashinline{\lapinv}$ as a function of amount of reduction. 
Shading denotes one standard deviation about the mean for 32 runs of the algorithm.
In general, the estimated error serves as an approximate upper bound of the true error in $\protect\smashinline{\lapinv}$ (although it is nearly exact for graphs with a geometric quality).  
The validity of the approximation allows one to use a bound on the estimated error as a \texttt{StopCriterion} in Algorithm~\ref{alg:GraphReductionIterative}. 
}
}
\label{fig:VerifyingAssumption}
\end{center}
\end{figure}

\section{Derivation of the optimal probabilistic action to an edge}
\label{SI:CostFunctionSolution}

As discussed in Section~\ref{subsec:GraphReductionFramework_CostFunction}, we seek to minimize:
\begin{equation}
\mathcal{C} = \expectation{\left\|\deltalapinv \right\|_{\text{F}}^2} -  \beta^2 \expectation{r}, \label{eq:CostFunctionAppendix}
\end{equation}
subject to
\begin{equation}
\expectation{\deltalapinv} = \matt{0}. \label{eq:ConstraintAppendix}
\end{equation}

When reducing multiple edges, $\smashinline{\expectation{r}}$ is additive and $\smashinline{\expectation{\left\|\deltalapinv \right\|_{\text{F}}^2}}$ is approximately additive (see Appendix Section~\ref{SI:StudyOfValidityOfUncorrelationAssumption}).  
Thus, we partition this minimization into a sequence of subproblems, treating each perturbation to an edge individually.  

Recall that
\begin{equation*}
\deltalapinv = \underbrace{f\!\left( \tfrac{\Delta\!w}{\we} , \we \Omega_e \right)}_{\text{nonlinear scalar}} \quad\! \times \!\!\underbrace{\vphantom{ \left( \tfrac{\Delta\!w}{\we}, \we \Omega_e \right) } \matt{M}_{\!e}}_{\text{constant matrix}}\!\!\!\!\!\!\!\!, \qquad \text{where} \qquad f = - \frac{\tfrac{\Delta\!w}{\we}}{1 + \tfrac{\Delta\!w}{\we} \we \Omega_e}.
\end{equation*}

We now derive the optimal probability of deleting ($p_d$), contracting ($p_c$), or reweighting \mbox{($1 - p_{d} - p_{c}$)} a given edge $e$, along with the change to its weight ($\Delta\!w$) in the case of the latter.  

The constraint~\eqref{eq:ConstraintAppendix} requires that this reweight satisfies
\begin{align}
\frac{p_{d}}{1- \we \Omega_e} - \frac{p_{c}}{\we \Omega_e} + \left( 1 - p_{d} - p_{c} \right) \expectation{f|\textrm{reweight}} = 0, \label{eq:Constraint2Appendix}
\end{align}
where we have used the following limits: 
\begin{align}
\begin{array}{rll}
\textrm{deletion:} \quad& \tfrac{\Delta\!w}{\we} \rightarrow -1,\quad & f \rightarrow \left(1- \we \Omega_e\right)^{-1}\\[-0pt]
\textrm{contraction:} \quad& \tfrac{\Delta\!w}{\we} \rightarrow +\infty,\quad & f \rightarrow -\left(\we \Omega_e\right)^{-1}.\label{eq:DeletionContractionLimitsAppendix}
\end{array}
\end{align}

Likewise, the cost function~\eqref{eq:CostFunctionAppendix} for acting on the edge $e$ becomes:
\begin{align}
\mathcal{C} = &\left( \frac{p_{d}}{\left(1- \we \Omega_e\right)^{2}} + \frac{p_{c}}{\left(\we \Omega_e\right)^{2}} \vphantom{\frac{p_{d}}{\left(1- \we \Omega_e\right)^{2}}}  + \left( 1 - p_{d} - p_{c} \right) \expectation{f^2|\textrm{reweight}} \right) m_e^2 \vphantom{\frac{p_{d}}{\left(1- \we \Omega_e\right)^{2}}} - \beta^2 \left( r_d p_{d} + r_c p_{c} \right)\!, \label{eq:CostFunction2Appendix}
\end{align}
where $r_d$ and $r_c$ are the number of prioritized items that would be removed by a deletion or contraction, respectively. 

For a fixed $p_{d}$ and $p_{c}$, \mbox{$\smashinline{\expectation{f|\textrm{reweight}}}$} is fixed by equation~\eqref{eq:Constraint2Appendix}.  
As \mbox{$\smashinline{\tfrac{\partial^2\!f}{\partial \Delta\!w^2} > 0}$} everywhere, the inequality \mbox{$\smashinline{\expectation{f^2|\textrm{reweight}} \geq \expectation{f|\textrm{reweight}}^2}$} becomes an equality under minimization of~\eqref{eq:CostFunction2Appendix}.  

Thus, if an edge is to be reweighted, it will be changed by the unique \mbox{$\smashinline{\Delta \! w}$} satisfying
\begin{equation}
\frac{p_{d}}{1- \we \Omega_e} - \frac{p_{c}}{\we \Omega_e} - (1-p_{d}-p_{c}) \frac{\tfrac{\Delta\!w}{\we}}{1 + \tfrac{\Delta\!w}{\we} \we \Omega_e} = 0. \label{eq:Constraint3Appendix}
\end{equation}

Clearly, the space of allowed solutions lies within the simplex \mbox{$\mathcal{S}\!: 0\leq p_{d}$, $0 \leq p_{c}$, $p_{d} + p_{c} \leq 1$}.  
The additional constraint \mbox{$\smashinline{-1\leq\tfrac{\Delta\!w}{\we}\leq\infty}$} further implies that \mbox{$\smashinline{p_c\leq \we \Omega_e}$} and \mbox{$\smashinline{p_d \leq 1 - \we \Omega_e}$}.  
Hence, we substitute \eqref{eq:Constraint3Appendix} into \eqref{eq:CostFunction2Appendix}, and minimize it over this domain (given $m_e$, $\we \Omega_e$, $\tau_e$, and $\beta$).  
After some careful elementary calculus, we obtain the solution provided in Figure~\ref{table:regimes} of the Main Text.

\section{Lifting the matrices of a contracted graph}
\label{sec:LiftingsIntuition} 
Here, we provide a detailed rationale for the definitions given in Section~\ref{subsec:GraphReductionFramework_NodeWeightedL}, namely, the choice of $\smashinline{\lapsubgtilde}$ and $\smashinline{\lapsubgtildeinv}$, and how to ``lift'' these matrices to the original dimension \mbox{$\smashinline{|\vg| \times |\vg|}$} when edges have been contracted. 

Recall the following definitions:
\begin{align}
\lapsubgtilde &= \matt{W}_{\!\!\!n}^{-1} \matTr{\matt{B}} \matt{W}_{\!\!\!e}^{ } \matt{B},\\
\lapsubgtildeinv &= \left( \lapsubgtilde + \matt{J} \right)^{-1} - \matt{J},\label{Eq:AppendixInverse}\\
\lapsubgtildelift &= \matTr{\matt{C}} \lapsubgtilde \matt{W}_{\!\!\!n}^{-1} \matt{C},\\
\lapsubgtildeinvlift&= \matTr{\matt{C}} \lapsubgtildeinv \matt{W}_{\!\!\!n}^{-1} \matt{C}, \label{eq:liftinglaplacian}
\end{align}
where
\begin{align}
\hphantom{lalalalalalalal} \qquad \matt{J} &= \frac{1}{\vect{1}^{\transposeU}\!\vect{w}_n^{ }} \vect{1}\vect{w}_n^\top,\label{eq:J}\\
\hphantom{lalalalalalalal}  \matt{C} &= \{c_{i\!j}\} = \left\{
\!\!\!\begin{array}{rl}
1& \textrm{node } j \textrm{ in } \textrm{supernode } i\\[-2pt]
 0& \text{otherwise.} \label{eq:C}
\end{array}
\right.
\end{align}

The above definitions ensure that the lifted $\smashinline{\lapsubgtildeinvlift}$ of the contracted graph is identical to the \mbox{$\smashinline{\we\rightarrow\infty}$} limit of the original $\lapsubginv$.  

To illustrate the consistency of these definitions, we consider a concrete example: the line graph with 3 edges, where the center edge is to be contracted (Figure~\ref{fig:ExampleGraphContraction}).  
Let the center edge have weight \mbox{$\smashinline{\we \gg 1}$}, while the other two have a fixed weight of $1$.  
\begin{figure}[ht]
\begin{center}
\centerline{\includegraphics[width=0.7\linewidth]{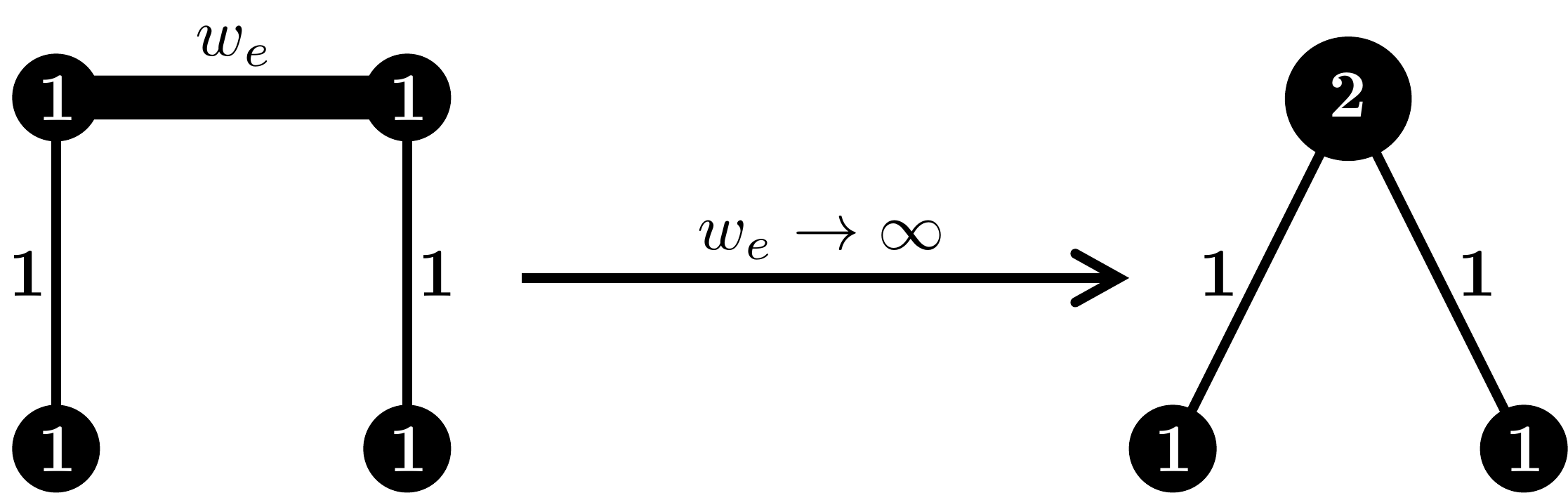}}
\caption{\small{\textbf{Contracting the center edge of a line graph}.
\textit{Left:} Original graph $\smashinline{G}$ with large weight $\smashinline{\we}$ on the center edge.  
\textit{Right:} Reduced graph $\smashinline{\wt{G}}$ obtained by contracting this edge (\mbox{$\smashinline{\we\rightarrow\infty}$}). 
Note that the weight of the contracted nodes sum to give the weight of the resulting supernode in the reduced graph.  
}
}
\label{fig:ExampleGraphContraction}
\end{center}
\end{figure}

For the original graph $G$, the Laplacian and its pseudoinverse are
\begin{align*}
\lapsubg= {\begin{pmatrix}
\vphantom{\Big)}1 & -1 & 0 & 0\\
\vphantom{\Big)}-1 & 1+\we & -\we & 0\\
\vphantom{\Big)}0 & -\we & 1+\we & -1\\
\vphantom{\Big)}-0 & 0 & -1 & 1
\end{pmatrix}},
\quad
\lapsubginv =\frac{1}{8}{\begin{pmatrix}
\vphantom{\Big)}5+\tfrac{2}{\we} & -1+\tfrac{2}{\we} & -1-\tfrac{2}{\we} & -3-\tfrac{2}{\we}\\
\vphantom{\Big)}-1+\tfrac{2}{\we} & 1+\tfrac{2}{\we} & 1-\tfrac{2}{\we} & -1-\tfrac{2}{\we}\\
\vphantom{\Big)}-1-\tfrac{2}{\we} & 1-\tfrac{2}{\we} & 1+\tfrac{2}{\we} & -1+\tfrac{2}{\we}\\
\vphantom{\Big)}-3-\tfrac{2}{\we} & -1-\tfrac{2}{\we} & -1+\tfrac{2}{\we} & 5+\tfrac{2}{\we}
\end{pmatrix}}.
\end{align*}

For the contracted graph $\wt{G}$, we have
\begin{align*}
\matt{W}_{\!\!\!n} = {\begin{pmatrix}
\vphantom{\Big)}1 & 0 & 0\\
\vphantom{\Big)}0 & 2 & 0\\
\vphantom{\Big)}0 & 0 & 1\\
\end{pmatrix}},
\quad
\matt{J} = {\begin{pmatrix}
\vphantom{\Big)}\tfrac{1}{4} & \tfrac{1}{2} & \tfrac{1}{4}\\
\vphantom{\Big)}\tfrac{1}{4} & \tfrac{1}{2} & \tfrac{1}{4}\\
\vphantom{\Big)}\tfrac{1}{4} & \tfrac{1}{2} & \tfrac{1}{4}\\
\end{pmatrix}},
\quad
\matt{C} = {\begin{pmatrix}
\vphantom{\Big)}1 & 0 & 0 & 0\\
\vphantom{\Big)}0 & 1 & 1 & 0\\
\vphantom{\Big)}0 & 0 & 0 & 1\\
\end{pmatrix}}.
\end{align*}

Thus, the reduced Laplacian and its pseudoinverse are
\begin{align*}
\lapsubgtilde= {\begin{pmatrix}
\vphantom{\Big)}1 & -1 & 0\\
\vphantom{\Big)}-\tfrac{1}{2} & 1 & -\tfrac{1}{2}\\
\vphantom{\Big)}0 & -1 & 1
\end{pmatrix}},
\quad
\lapsubgtildeinv = \frac{1}{8}{\begin{pmatrix}
\vphantom{\Big)}5 & -2 & -3\\
\vphantom{\Big)}-1 & 2 & -1\\
\vphantom{\Big)}-3 & -2 & 5
\end{pmatrix}}.
\end{align*}

When lifted to the original dimensions \mbox{$|\vg| \times |\vg|$}, these become
\begin{align*}
\lapsubgtildelift = {\begin{pmatrix}
\vphantom{\Big)}1 & -\tfrac{1}{2} & -\tfrac{1}{2} & 0\\
\vphantom{\Big)}-\tfrac{1}{2} & \tfrac{1}{2} & \tfrac{1}{2} & -\tfrac{1}{2}\\
\vphantom{\Big)}-\tfrac{1}{2} & \tfrac{1}{2} & \tfrac{1}{2} & -\tfrac{1}{2}\\
\vphantom{\Big)}0 & -\tfrac{1}{2} & -\tfrac{1}{2} & 1\\
\end{pmatrix}},
\quad
\lapsubgtildeinvlift = \frac{1}{8}{\begin{pmatrix}
\vphantom{\Big)}5 & -1 & -1 & -3\\
\vphantom{\Big)}-1 & 1 & 1 & -1\\
\vphantom{\Big)}-1 & 1 & 1 & -1\\
\vphantom{\Big)}-3 & -1 & -1 & 5\\
\end{pmatrix}}.
\end{align*}

Note that the lifted $\smashinline{\lapsubgtildeinvlift}$ is equal to the \mbox{$\smashinline{\we\rightarrow\infty}$} limit of the original $\smashinline{\lapsubginv}$, as desired.  
In contrast, the original $\smashinline{\lapsubg}$ diverges, while the lifted $\smashinline{\lapsubgtildelift}$ averages the rows and columns of the merged nodes.  
Moreover, regardless of whether node weights are included in the definitions, using the standard \mbox{Moore--Penrose} pseudoinverse of the reduced Laplacian will yield a lifted pseudoinverse that is not equivalent to the original in the \mbox{$\smashinline{\we \rightarrow \infty}$} limit.  

Additionally, we remark that, while contraction always requires the summing of node weights, it can also lead to the summing of edge weights (when the contracted edge participates in any triangle in the original graph, see Figure~\ref{fig:ExampleGraphContractionTriangle}).

\begin{figure}[ht]
\begin{center}
\centerline{\includegraphics[width=0.9\linewidth]{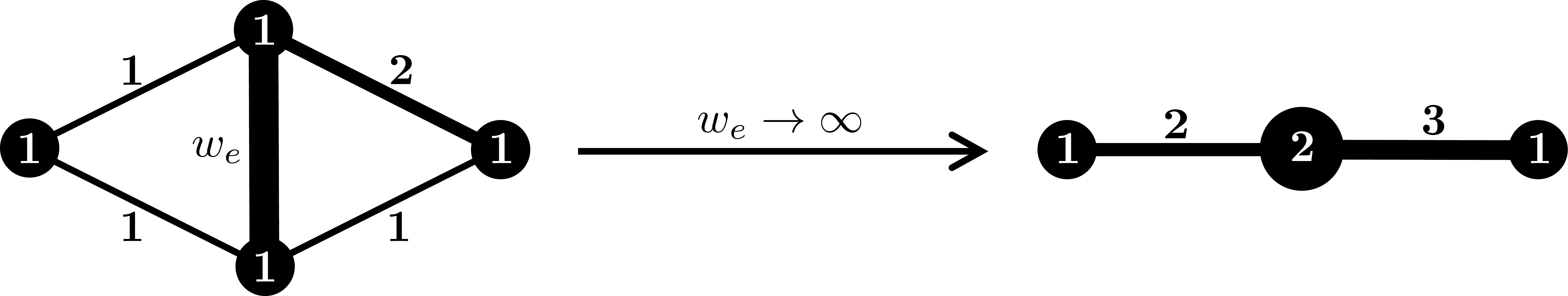}}
\caption{\small{\textbf{Contracting an edge that participates in triangles}.
\textit{Left:} Original graph $G$ containing an edge with large weight $\we$ that participates in two triangles.  
\textit{Right:} Reduced graph $\wt{G}$ obtained by contracting this edge (\mbox{$\smashtemp{\we\rightarrow\infty}$}).  
Note that the two non-contracted edges in each triangle form a single edge in the reduced graph with weight equal to their sum.  
}
}
\label{fig:ExampleGraphContractionTriangle}
\end{center}
\end{figure}

\section{Proof of the relationship between the hyperbolic distance\newline and \mbox{$\sigma$-spectral} approximation}
\label{sec:hyperbolicdistanceproof}
In this section, we prove Theorem~\ref{ourfirsttheoreminapaperlol} from Section~\ref{subsec:Results_HyperbolicDistance}:
\begin{theorem} \label{ourfirsttheoreminapaperlol} 
If \mbox{$\smashtemp{d_{h}\big(\lapsubg, \lapsubgtilde \big)  \leq \ln (\sigma)}$} 
, then $\wt{G}$ is a \mbox{$\sigma$-spectral} approximation of $G$.  
\end{theorem}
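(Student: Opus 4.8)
The plan is to unwind both definitions and reduce the claim to a statement about a single fixed test vector $\vect{x}$ orthogonal to $\vect{1}$, then bound the spectral ratio by an increasing function of $d_{\vect{x}}$. First I would recall that $\wt{G}$ being a $\sigma$-spectral approximation of $G$ means $\sigma^{-1} \, \vect{x}^{\!\top}\!\!\lapsubg\vect{x} \leq \vect{x}^{\!\top}\!\!\lapsubgtilde\vect{x} \leq \sigma \, \vect{x}^{\!\top}\!\!\lapsubg\vect{x}$ for all $\vect{x}$ (with the convention that we restrict to $\vect{x} \perp \vect{1}$, since both quadratic forms vanish on $\vect{1}$). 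So it suffices to show that the hypothesis $d_{h}(\lapsubg,\lapsubgtilde) \leq \ln\sigma$ forces, for each such $\vect{x}$, the ratio $\rho \defineC \vect{x}^{\!\top}\!\!\lapsubgtilde\vect{x} \big/ \vect{x}^{\!\top}\!\!\lapsubg\vect{x}$ to lie in $[\sigma^{-1},\sigma]$. Since $d_{h}$ is the supremum of $d_{\vect{x}}$ over admissible $\vect{x}$, the hypothesis gives $d_{\vect{x}}(\lapsubg,\lapsubgtilde) \leq \ln\sigma$ for every such $\vect{x}$, and it is enough to work with one $\vect{x}$ at a time.

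The key step is to produce a clean lower bound for $d_{\vect{x}}$ in terms of $\rho$. Writing $a = \vect{x}^{\!\top}\!\!\lapsubg\vect{x}$, $b = \vect{x}^{\!\top}\!\!\lapsubgtilde\vect{x}$, and $n = \lVert\vect{x}\rVert_2^2$, the numerator $\lVert(\lapsubg - \lapsubgtilde)\vect{x}\rVert_2^2 \cdot n$ appearing in \eqref{eq:DistanceH} can be bounded below using Cauchy--Schwarz: $\lVert(\lapsubg-\lapsubgtilde)\vect{x}\rVert_2 \cdot \lVert\vect{x}\rVert_2 \geq \big| \vect{x}^{\!\top}(\lapsubg-\lapsubgtilde)\vect{x} \big| = |a-b|$, so $\lVert(\lapsubg-\lapsubgtilde)\vect{x}\rVert_2^2 \, n \geq (a-b)^2$. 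Hence
\begin{align*}
d_{\vect{x}}(\lapsubg,\lapsubgtilde) \;\geq\; \textrm{arccosh}\!\left(1 + \frac{(a-b)^2}{2ab}\right).
\end{align*}
Now a direct computation shows $1 + \frac{(a-b)^2}{2ab} = \frac{1}{2}\big(\tfrac{a}{b} + \tfrac{b}{a}\big) = \cosh\!\big(\ln(a/b)\big) = \cosh(\ln\rho)$, because $\cosh(\ln t) = \tfrac12(t + t^{-1})$. Therefore $d_{\vect{x}}(\lapsubg,\lapsubgtilde) \geq \textrm{arccosh}(\cosh(\ln\rho)) = |\ln\rho|$, using that $\textrm{arccosh}\circ\cosh$ is the absolute value on the reals. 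Combining with the hypothesis $d_{\vect{x}} \leq \ln\sigma$ gives $|\ln\rho| \leq \ln\sigma$, i.e.\ $\sigma^{-1} \leq \rho \leq \sigma$, which is exactly the $\sigma$-spectral condition for this $\vect{x}$. Taking the intersection over all admissible $\vect{x}$ finishes the proof.

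I expect the main (minor) obstacle to be bookkeeping around the kernel: one must check that restricting the supremum in \eqref{Eq:TrueHyperbolicDistance} to $\vect{x} \perp \vect{1}$ is consistent with the $\sigma$-spectral definition (vectors with a component along $\vect{1}$ contribute nothing to either quadratic form, and the lifted/coarsened setting is not in play here since the theorem is stated for sparsification, so $\lapsubg$ and $\lapsubgtilde$ act on the same space with the same kernel $\mathrm{span}\{\vect{1}\}$), and to handle the degenerate case $a = 0$ (then $\vect{x}$ is in the kernel and both sides are trivially fine). The only genuinely non-routine ingredient is recognizing the $\cosh(\ln\rho)$ identity, which is what makes the $\ln\sigma$ threshold — rather than some messier function of $\sigma$ — exactly the right one; everything else is Cauchy--Schwarz and monotonicity of $\textrm{arccosh}$.
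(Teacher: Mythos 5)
Your proposal is correct and follows essentially the same route as the paper's proof: the paper's decomposition of $\lapsubg\vect{x}$ and $\lapsubgtilde\vect{x}$ into components parallel and perpendicular to $\vect{x}$, followed by dropping the perpendicular difference, is exactly your Cauchy--Schwarz lower bound $\smallnorm{(\lapsubg-\lapsubgtilde)\vect{x}}_2^2\,\smallnorm{\vect{x}}_2^2 \geq (a-b)^2$, and the paper's final step via $\textrm{arccosh}(x)=\ln\big(x+\sqrt{x^2-1}\big)$ is the same computation as your $\textrm{arccosh}(\cosh(\ln\rho))=|\ln\rho|$ identity. No gaps.
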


\begin{proof}
Let $G$ be the original graph and $\wt{G}$ its sparse approximation (no contraction/removing of nodes). 
Recall the relevant definitions: 

$\wt{G}$ is a $\sigma$-spectral approximation of $G$ \cite{spielman2008spectral} if 
\begin{equation}
\label{eq:spielmansparsify}
\frac{1}{\sigma} \vecTr{\vect{x}}\lapsubg\vect{x} \leq \vecTr{\vect{x}}\lapsubgtilde\vect{x} \leq \sigma \vecTr{\vect{x}}\lapsubg\vect{x}, \quad \forall \vect{x} \in \mathbb{R}^{\left|\!\vgsup\!\right|}_{\vphantom{>0}}.
\end{equation} 

We propose to instead measure the hyperbolic distance between the resulting $\lapsubg \vect{x}$ and $\lapsubgtilde \vect{x}$, namely  
\begin{equation} 
d_{h}\!\left(\lapsubg, \lapsubgtilde \right) \defineC \sup\limits_{\vect{x} \bot \vect{1}} \left\{ \textrm{arccosh}\!\left( \!1+ \frac{ \big\lVert (\lapsubg- \lapsubgtilde)\vect{x} \big\rVert_2^2 \, \big\lVert \vect{x} \big\rVert_2^2}{ 2 \big(\vect{x}^{\!\top}\!\!\lapsubg\vect{x}\big) \big( \vect{x}^{\!\top}\!\!\lapsubgtilde \vect{x} \big) } \right)\right\}, \label{eq:DistanceHProof}
\end{equation}
where $\lapsubg$ and $\lapsubgtilde$ are the Laplacians of $G$ and $\wt{G}$, respectively, and $\vect{x}$ is perpendicular to their kernels.

Consider the result of a Laplacian acting on such a vector $\vect{x}$, and decompose the output as a component parallel to $\vect{x}$ with magnitude $\ell_{\parallel}$ and a component $\smashinline{\vect{\ell}_\bot}$ perpendicular to $\smashinline{\vect{x}}$: 
\begin{equation} 
\lapsubg \vect{x} = \overtildeell_{\parallel} \frac{\vect{x}}{\left\| \vect{x} \right\|_2} + \vect{\ell}_\bot,  \qquad
\lapsubgtilde \vect{x} = \widetilde{\ell}_{\parallel} \frac{\vect{x}}{\left\| \vect{x} \right\|_2} + \overtildeellvec_\bot. 
\end{equation}
Hence, 
\begin{align} 
\big\lVert (\lapsubg- \lapsubgtilde)\vect{x} \big\rVert_2^2 = (\ell_{\parallel} - \overtildeell_{\parallel})^2 + \big\lVert \vect{\ell}_\bot - \overtildeellvec_\bot \big\rVert_2^2, \label{eq:quadratic1} \\
\vect{x}^{\!\top}\!\!\lapsubg \vect{x} = \ell_{\parallel}  \left\| \vect{x} \right\|_2, \qquad \vect{x}^{\!\top}\!\!\lapsubgtilde \vect{x} = \overtildeell_\parallel  \left\| \vect{x} \right\|_2. \label{eq:quadratic2}
\end{align}

Let \smashinline{$z = \widetilde{\ell}_{\parallel}/\ell_{\parallel}$}.  
Substituting \eqref{eq:quadratic2} into \eqref{eq:spielmansparsify}, we see that $\smashinline{\widetilde{G}}$ is a \mbox{$\sigma$-spectral} approximation of $G$ if 
\begin{equation}
\text{max}\bigg\{\frac{\widetilde{\ell}_\parallel}{\ell_\parallel},\frac{\ell_\parallel}{\widetilde{\ell}_\parallel}\bigg\} = \text{max}\bigg\{z,\frac{1}{z}\bigg\} \leq \sigma.  
\end{equation}

Now, substituting \eqref{eq:quadratic1} into \eqref{eq:DistanceH}, we obtain: 
\begin{align*}
d_{\vect{x}}\!\left(\lapsubg, \lapsubgtilde \right) &= \textrm{arccosh}\!\left( \!1+ \frac{ \Big( (\ell_{\parallel} - \overtildeell_{\parallel})^2 + \big\lVert \vect{\ell}_\bot - \overtildeellvec_\bot \big\rVert_2^2 \Big)  \big\lVert \vect{x} \big\rVert_2^2}
{ 2 \ell_{\parallel} \left\| \vect{x} \right\|_2 \overtildeell_{\parallel} \left\| \vect{x} \right\|_2 }  \right) \\
&\geq \textrm{arccosh}\!\left( \!1+ \frac{ \widetilde{\ell}_{\parallel}^{2} - 2\widetilde{\ell}_{\parallel}\ell_{\parallel} + \ell_{\parallel}^2}
{ 2 \ell_{\parallel} \, \overtildeell_{\parallel} }  \right) \\
&\geq \textrm{arccosh}\!\left( \frac{1}{2}\bigg( z + \frac{1}{z} \bigg) \!\right)\!. 
\end{align*}

Using the identity $\textrm{arccosh}(x) = \ln\!\Big (x + \sqrt{x^2 - 1} \Big)$, 
\begin{align*}
d_{\vect{x}}\!\left(\lapsubg, \lapsubgtilde \right) &\geq \textrm{ln}\!\left( \frac{1}{2}\bigg( z + \frac{1}{z} \bigg) + \sqrt{\frac{1}{4}\bigg( z + \frac{1}{z} \bigg)^2 - 1}\right)\qquad\qquad\!\!\!\!\: \\
&\geq \textrm{ln}\!\left( \frac{1}{2}\bigg( z + \frac{1}{z} \bigg) + \frac{1}{2}\bigg| z - \frac{1}{z} \bigg| \right) \\
&\geq \big|\textrm{ln}\!\left( z \right)\!\big|. 
\end{align*}

Thus, if \mbox{$\smashtemp{d_{\vect{x}}\big(\lapsubg, \lapsubgtilde \big)  \leq \ln (\sigma)\: \forall \vect{x} \bot \vect{1}}$}, then $\wt{G}$ is a \mbox{$\smashtemp{\sigma}$-spectral} approximation of $\smashtemp{G}$, as desired. 
\end{proof}

\section{Number of edges acted upon per iteration can be $\mathcal{O}(|V|)$}
\label{sec:VaryingParams_AlgorithmParameterQ}
In this section, we study the effect of varying the parameter $q$, the fraction of sampled edges acted upon,  
using real-world datasets from different domains (Figure~\ref{fig:VaryingQ}).  

For each iteration of our algorithm, we sample a random independent edge set and act on the fraction \mbox{$\smashinline{q}$} with the lowest $\beta_{\star e}$ (see Main Text Section~\ref{sec:GraphReductionAlgorithms}).  
We find that the resulting error asymptotes around \mbox{$\smashinline{q \sim 1/16}$}.  
We expect that by combining this sampling method with existing algorithmic primitives (eg, \cite{spielman2008graphsparsification}, see Appendix Section~\ref{sec:Efficientlycomputingmeandomega}), our algorithm could achieve a running time of $\smashinline{\wt{\mathcal{O}}(\langle k \rangle |E|)}$, where $\smashinline{\langle k \rangle}$ is the average degree (see Main Text Section~\ref{sec:GraphReductionAlgorithms}).  
This would allow it to be used in \mbox{large-scale} applications of graph reduction. 
\begin{figure}[H]
\begin{center}
\centerline{\includegraphics[width=1.01\linewidth]{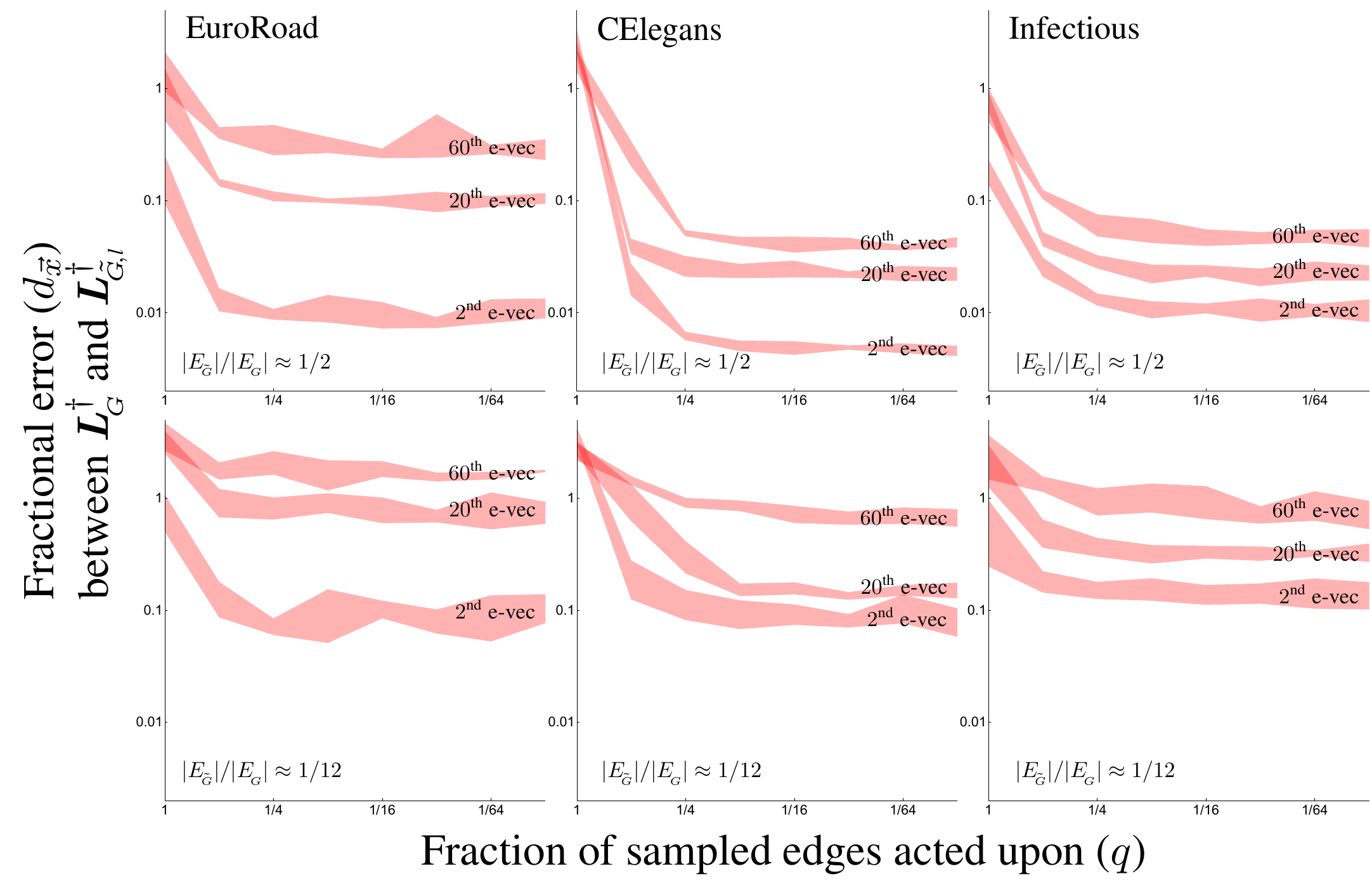}}
\caption{\small{\textbf{Number of sampled edges acted upon per iteration can be $\smashinline{\mathcal{O}(|V|)}$}.  
We study the effect of varying \mbox{$q$}, the fraction of the sampled edges that are acted upon per iteration, using three datasets:  
\textit{Left}: a transportation network of European cities and roads between them (1039 nodes and 1305 edges) from \cite{subelj2011robust};  
\textit{Middle}: the \textit{C.~elegans} posterior nervous system connectome (269 nodes and 2902 edges) from \cite{jarrell2012connectome}; 
and \textit{Right}: a weighted social network of \mbox{face-to-face} interactions during an exhibition on infectious diseases, with initial edge weights proportional to the number of interactions between pairs of people (410 nodes and 2765 edges) from \cite{isella2011whats}.   
We prioritize edge reduction (allowing for deletion, contraction, and reweighting).  
At each iteration, the algorithm randomly samples a maximal independent edge set, 
and chooses $\beta$ such that a fraction $q$ of these edges (with the lowest $\beta_{\star e}$) are acted upon.  
For each run, we compute the hyperbolic distance $\smashinline{d_{\vect{x}}}$ (fractional error) between $\smashinline{\lapsubginvcap \vect{x}}$ and $\smashinline{\lapsubgtildeinvlift \vect{x}}$, where $\smashinline{\vect{x}}$ is one of three eigenvectors of the original Laplacian. 
\textit{Top} plots display the results when the graph has $\smashinline{1/2}$ of its original number of edges, and \textit{bottom} plots when it has $1/12$.  
Shading denotes one standard deviation about the mean for 8 runs of the algorithm for a given value of $q$. 
Note that a significant fraction (\mbox{$\smashinline{q \sim 1/16}$}) of the sampled edges can be reduced each iteration without sacrificing much in terms of accuracy.  
As, empirically, the size of the independent edge sets are typically \mbox{$\smashinline{\mathcal{O}(|V|)}$}, the number of edges acted upon per iteration can likewise be \mbox{$\smashinline{\mathcal{O}(|V|)}$}.  
}
}
\label{fig:VaryingQ}
\end{center}
\end{figure}

\section{Efficiently computing $m_e$} 
\label{sec:Efficientlycomputingmeandomega} 
As discussed in Main Text Section~\ref{sec:GraphReductionAlgorithms}, the main computational bottleneck of our algorithm is computing $\smashinline{\Omega_e}$ and $m_e$. 
For $\smashinline{\Omega_e}$, we can draw on the work of \cite{spielman2008graphsparsification}, which describes a method for efficiently computing $\smashinline{\varepsilon}$-approximate values of $\smashinline{\Omega_e}$ for all edges, requiring \mbox{$\smashinline{\wt{\mathcal{O}}(|E| \log |V| / \epsilon^2)}$} time. 
In this section, we describe an analogous procedure to efficiently compute the $m_e$.  

Recall that the reduced Laplacian is: 
\begin{align}
\lapsubgtilde &= \matt{W}_{\!\!\!n}^{-1} \matTr{\matt{B}} \matt{W}_{\!\!\!e}^{ } \matt{B},\nonumber
\end{align}
hence, the quantity \mbox{$\smashinline{\laphatsubgtilde \defineC \matt{W}_{\!\!\!n}^{\sfrac{1\!}{2}} \lapsubgtilde \matt{W}_{\!\!\!n}^{-\sfrac{1\!}{2}}}$} is clearly symmetric.  

Less obvious is the fact that \mbox{$\smashinline{\laphatsubgtildeinv \defineC \matt{W}_{\!\!\!n}^{\sfrac{1\!}{2}} \lapsubgtildeinv \matt{W}_{\!\!\!n}^{-\sfrac{1\!}{2}}}$} is also symmetric.  
This can be seen by noting that $\smashinline{\matt{W}_{\!\!\!n}^{\sfrac{1\!}{2}} \matt{J} \matt{W}_{\!\!\!n}^{-\sfrac{1\!}{2}}}$ is symmetric, and using the definition of the inverse (equation~\eqref{Eq:AppendixInverse}): 
\begin{align}
\laphatsubgtildeinv &= \matt{W}_{\!\!\!n}^{\sfrac{1\!}{2}} \lapsubgtildeinv \matt{W}_{\!\!\!n}^{-\sfrac{1\!}{2}} \nonumber \\
&= \matt{W}_{\!\!\!n}^{\sfrac{1\!}{2}} \left( \! \left( \lapsubgtilde + \matt{J} \right)^{-1} - \matt{J} \right) \matt{W}_{\!\!\!n}^{-\sfrac{1\!}{2}} \nonumber\\
&= \matt{W}_{\!\!\!n}^{\sfrac{1\!}{2}} \left( \lapsubgtilde + \matt{J} \right)^{-1} \matt{W}_{\!\!\!n}^{-\sfrac{1\!}{2}} - \matt{W}_{\!\!\!n}^{\sfrac{1\!}{2}} \matt{J} \matt{W}_{\!\!\!n}^{-\sfrac{1\!}{2}}\nonumber\\
&= \left( \matt{W}_{\!\!\!n}^{\sfrac{1\!}{2}} \lapsubgtilde \matt{W}_{\!\!\!n}^{-\sfrac{1\!}{2}} + \matt{W}_{\!\!\!n}^{\sfrac{1\!}{2}} \matt{J} \matt{W}_{\!\!\!n}^{-\sfrac{1\!}{2}} \right)^{-1} - \matt{W}_{\!\!\!n}^{\sfrac{1\!}{2}} \matt{J} \matt{W}_{\!\!\!n}^{-\sfrac{1\!}{2}}. \nonumber
\end{align}

We also remark that $\smashinline{\laphatsubgtildeinv}$ is indeed the pseudoinverse of $\smashinline{\laphatsubgtilde}$:
\begin{align}
\laphatsubgtildeinv \laphatsubgtildeeq = \laphatsubgtildeeq \laphatsubgtildeinv = \matt{I} - \matt{W}_{\!\!\!n}^{\sfrac{1\!}{2}} \matt{J} \matt{W}_{\!\!\!n}^{-\sfrac{1\!}{2}} \nonumber
\end{align}

The change to the reduced Laplacian $\smashinline{\lapsubgtilde}$ is given by 
\begin{align}
\deltalapsubgtilde &= \matt{W}_{\!\!\!n}^{-1} \bvec \Delta\!\we \bvecT\nonumber
\end{align}

Thus, by the Woodbury matrix identity, the change to its inverse is
\begin{align}
\deltalapinvsubgtilde &= f \we \lapsubgtildeinv \matt{W}_{\!\!\!n}^{-1} \bvec \bvecT \lapsubgtildeinv\nonumber
\end{align}
where $f$ is given by equation~\eqref{eq:f}.

Lifting this change back to the original dimension via equation~\eqref{eq:liftinglaplacian} gives
\begin{align}
\deltalapinvsubgtildelift &= f \we \matTr{\matt{C}} \lapsubgtildeinv \matt{W}_{\!\!\!n}^{-1} \bvec \bvecT \lapsubgtildeinv \matt{W}_{\!\!\!n}^{-1} \matt{C} \nonumber
\end{align}

In particular, as $\smashinline{\lapsubgtildeinv \matt{W}_{\!\!\!n}^{-1}}$ is symmetric, $\smashinline{\deltalapinvsubgtildelift}$ is also symmetric, thus we can write the Frobenius norm as
\begin{align}
\left\| \deltalapinvsubgtildelift \right\|_F &= f \we \bvecT \lapsubgtildeinv \matt{W}_{\!\!\!n}^{-1} \matt{C} \matTr{\matt{C}} \lapsubgtildeinv \matt{W}_{\!\!\!n}^{-1} \bvec \\
&= f m_e^{ } 
\end{align}

Note that the definition of $m_e^{ }$ provided in Section~\ref{subsec:GraphReductionFramework_MinimizingError} of the Main Text (equation~\eqref{eq:ScalarMe}) 
applies to the case of unit node weights, and the general expression is given by 
\begin{align}
m_e^{ } &= \we \bvecT \lapsubgtildeinv \lapsubgtildeinv \matt{W}_{\!\!\!n}^{-1} \bvec,  \label{eq:newmedef} 
\end{align}
where we have used \mbox{\smashinline{$\matt{C} \matTr{\matt{C}} = \matt{W}_{\!\!\!n}$}}.

Thus, we can express $m_e^{ }$ in terms of $\smashinline{\laphatsubgtildeinv}$:
\begin{align} 
m_e^{ }  &= \we \bvecT \matt{W}_{\!\!\!n}^{-\sfrac{1\!}{2}} \laphatsubgtildeinv \laphatsubgtildeinv \matt{W}_{\!\!\!n}^{-\sfrac{1\!}{2}} \bvec \nonumber\\
&= \we \left \| \laphatsubgtildeinv \matt{W}_{\!\!\!n}^{-\sfrac{1\!}{2}} \bvec \right \|_2^2.\nonumber
\end{align}

We can now use the \mbox{Johnson--Lindenstrauss} lemma to build a structure from which one can efficiently compute approximations of $m_e^{ }$.  
Let $\matt{Q}$ be a random projection matrix of size \mbox{$\smashinline{k\times n}$}, where \mbox{\smashinline{$k=\mathcal{O}(\log n/\varepsilon^2)$}}, 
then one can compute \mbox{$\varepsilon$-approximations} of $m_e^{ }$ as follows: 
\begin{align}
m_e^{ } &\approx \we \left \| \matt{Q} \laphatsubgtildeinv \matt{W}_{\!\!\!n}^{-\sfrac{1\!}{2}} \bvec \right \|_2^2. \nonumber
\end{align}

Let $\smashinline{\matt{Z}} = \matt{Q} \laphatsubgtildeinv$, and denote the $\smashinline{i^{\text{th}}}$ rows of $\smashinline{\matt{Q}}$ and $\smashinline{\matt{Z}}$ by $\smashinline{\vect{q}_i^{ }}$ and $\smashinline{\vect{z}_i^{ }}$, respectively.  
Then, one can make $\smashinline{k}$ calls to an efficient algebraic multigrid solver (we used the \textit{pyamg} package \cite{bell2015pyamg}) to obtain approximate solutions to $\smashinline{\laphatsubgtilde \vect{z}_i^{ } = \vect{q}_i^{ }}$ for the $k$ rows of $\smashinline{\matt{Z}}$.  
An approximation to the $\smashinline{m_e^{ }}$ of any edge can now be computed by taking the difference between the columns of \smashinline{$\matt{Z}\matt{W}_{\!\!\!n}^{-\sfrac{1\!}{2}}$} corresponding to the two nodes jointed by this edge, and taking the squared $2$-norm of the result.  

\subsection{Constructing the projection matrix}

Care must be taken in constructing the projection matrix $\smashinline{\matt{Q}}$.  
In particular, its rows must be orthogonal to the null space of \smashinline{$\laphatsubgtilde$}, namely \smashinline{$\vect{w}_n^{\sfrac{1\!}{2}}$}.  
In addition, the columns must be nearly unit length.  
To this end, we initialize $\smashinline{\matt{Q}}$ as a random matrix with entries $\smashinline{\{1/\sqrt{k}, - 1/\sqrt{k} \}}$ with equal probability and iterate the following steps:
\begin{enumerate}
\item For each column, scale its values such that it has unit length
\item For each row, subtract its weighted mean \smashinline{$\vect{q}_i^\top \! \vect{w}_n^{\sfrac{1\!}{2}} / \vect{1}^{\transposeU} \! \vect{w}_n^{\sfrac{1\!}{2}}$}
\end{enumerate} 
We iterate this procedure until the columns have nearly unit lengths, to within a factor sufficiently smaller than $\smashinline{\varepsilon}$.  

As a proof of concept, in Figure~\ref{fig:NumeratorFast}, we show the approximate $m_e^{ }$ as a function of their exact values. 
\begin{figure}[H]
\begin{center}
\centerline{\includegraphics[width=0.35\linewidth]{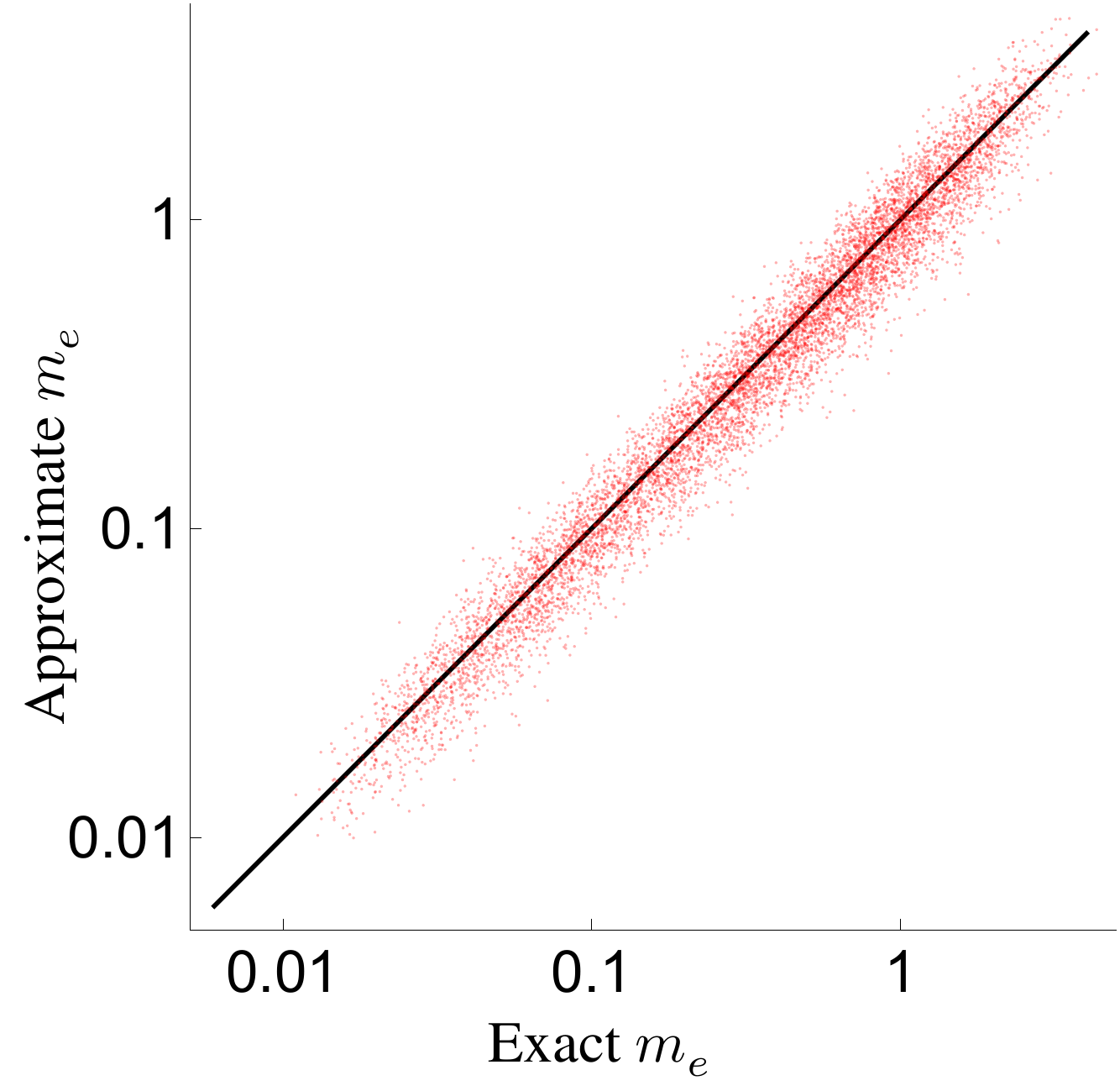}}
\caption{\small{\textbf{Efficient approximation of $m_e^{ }$}.  
As a proof of concept, we compare the approximation of $m_e^{ }$ (computed using the procedure described in this Section) with their exact values.
Here, we consider a \mbox{$64\times 64$} torus graph ($4096$ nodes and $8192$ edges), where the edge weights are randomly distributed as $\exp(U(-2,2))$, where $U(a,b)$ is the uniform distribution.  
To calculate the approximate $m_e^{ }$, we project from $4096$ to $33$ dimensions, resulting in approximations that are typically within a factor of $1.27$ of the exact value.  
}
}
\label{fig:NumeratorFast}
\end{center}
\end{figure}

\section{Perturbations to eigenvalues of the Laplacian pseudoinverse} 
\label{SI:PertubativeAnalysis}

We first provide the lowest order change in the eigenvalues of $\smashinline{\lapsubginv}$.
Then, we show how it relates to the Frobenius norm of the perturbation, explicitly relating it to our graph reduction algorithm.  

Consider an inverse Laplacian $\smashinline{\lapinv}$, which has an eigenvector $\smashinline{\vect{x}}$ (without loss of generality, assume \mbox{$\smashinline{\|\vect{x}\|_2^{ } = 1}$}) with associated eigenvalue $\smashinline{\lambda}$.  
If we perturb \mbox{$\smashinline{\matt{L}_{ }^\dagger}$} by \mbox{$\smashinline{\varepsilon\deltalapinv}$}, we can solve for the \mbox{first-order} corrections to this ``eigenpair'' as follows: 
\begin{align*}
(\matt{L}_{ }^\dagger + \varepsilon\deltalapinv)(\vect{x} + \varepsilon\Delta\!\vect{x}) &= (\lambda + \varepsilon\Delta\!\lambda)(\vect{x} + \varepsilon\Delta\!\vect{x})\\
(\matt{L}_{ }^\dagger - \lambda)\Delta\!\vect{x} &= (\Delta\!\lambda - \deltalapinv)\vect{x} + \mathcal{O}(\varepsilon),
\end{align*}
where we have used $\smashinline{\lapinv \vect{x} = \lambda \vect{x}}$.  

Taking the inner product with $\smashinline{\vect{x}}$ gives
\begin{align*}
\vecTr{\vect{x}}(\lapinv - \lambda)\Delta\!\vect{x} &= \vecTr{\vect{x}}(\Delta\!\lambda - \deltalapinv)\vect{x}\\
\Delta\!\vecTr{\vect{x}}(\lapinv - \lambda)\vect{x} &= \Delta\!\lambda \vecTr{\vect{x}}\vect{x} - \vecTr{\vect{x}}\lapinv \vect{x}\\
0 &= \Delta\!\lambda -\vecTr{\vect{x}}\lapinv\vect{x},
\end{align*}
where we have used the symmetry of $\smashinline{\lapinv}$.

This provides the \mbox{first-order} correction to the eigenvalues of \mbox{$\smashinline{\lapinv + \varepsilon\deltalapinv}$}: 
\begin{align}
\label{eq:correctioneigenvaluesappendix}
\Delta\!\lambda = \vecTr{x}\deltalapinv\vect{x}.
\end{align}							

The correction in~\eqref{eq:correctioneigenvaluesappendix} is controlled by the operator norm of  $\smashinline{\deltalapinv}$,
\begin{align*}
\Delta\!\lambda = \vecTr{\vect{x}}\deltalapinv\vect{x} \leq \underset{{\|\vect{x}\|_2^{ } = 1}}{\text{sup}} \left\|\deltalapinv\vect{x}\right\|_{2}^{} = \left\|\deltalapinv\right\|_{\text{op}}.
\end{align*}
Thus, bounding the \mbox{first-order} correction to the eigenvalues,
\begin{align}
\label{eq:boundop}
| \Delta\!\lambda | \leq \left\|\deltalapinv\right\|_{\text{op}}. 
\end{align}

As the operator norm is bounded by the Frobenius norm (by the \mbox{Cauchy--Schwarz} inequality), the estimated error (ie, $\smashinline{\sum{\expectation{\left\|\deltalapinv\right\|_{\text{F}}^2 }}}$, equation~\eqref{eq:approximationappendix}) provides a conservative bound for the change in the eigenvalues of the resulting reduced graph.  

Moreover, as the bound is the same for all eigenvalues of the perturbed $\smashinline{\lapinv}$, the \textit{relative} error is more tightly bounded for its largest eigenvalues (those associated with \mbox{large-scale} structure).

\section{Comparison of graph reduction methods\newline using typical similarity measures}
\label{sec:StandardMetricComparisons}

Our proposed hyperbolic distance is not usually used as a measure of similarity.  
Hence, in this section, we show that other more commonly used measures yield similar results when comparing graph reduction algorithms.  

\subsection{Sparsification}
\label{subsec:StandardMetricComparisons_Sparsification}
Figure~\ref{fig:SpielmanComparison2} compares our algorithm (prioritizing edge reduction, and excluding the possibility of contraction) with the spectral sparsification algorithm of \cite{spielman2008graphsparsification} using a stochastic block model (SBM) with four distinct communities.  
We choose a highly associative SBM due to the clear separation between the eigenvectors associated with global structure (ie, the communities) and the bulk of the spectrum.  
Note that these algorithms have different objectives (preserving $\smashinline{\lapinv}$ and $\smashinline{\matt{L}}$, respectively), and both accomplish their desired goal.

\begin{figure}[ht]
\begin{center}
\centerline{\includegraphics[width=1\linewidth]{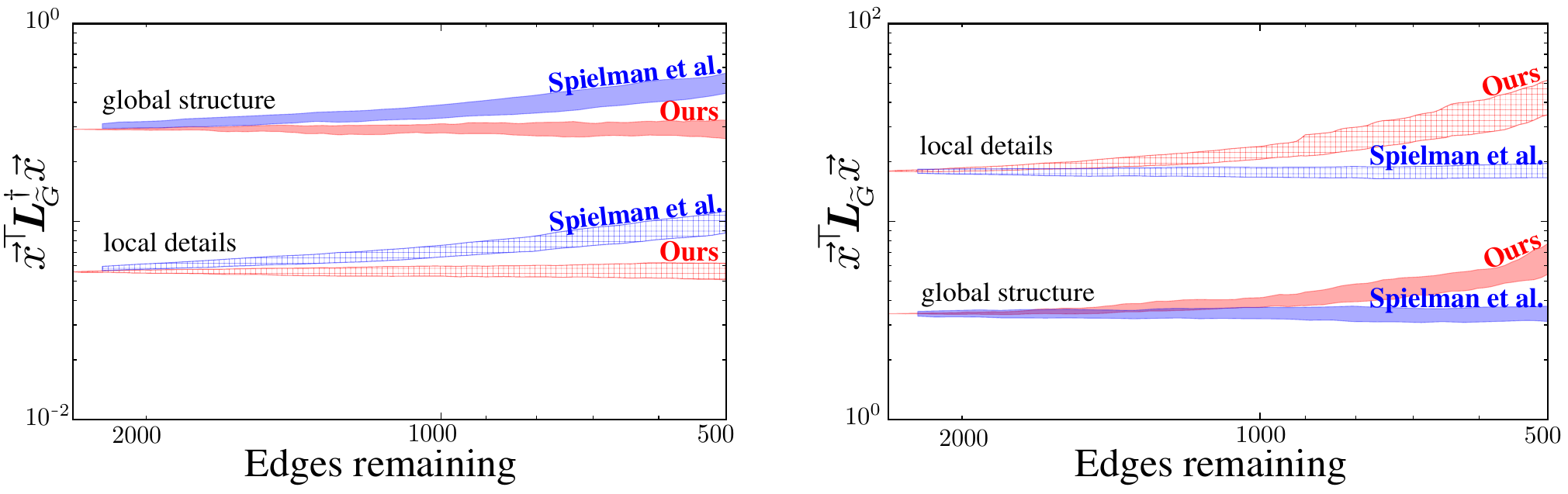}}
\caption{\textbf{Our sparsification algorithm preferentially preserves global structure}. 
We compare our algorithm without contraction (in \textbf{\textcolor{red}{red}}) with that of Spielman~\&~Srivastava \cite{spielman2008graphsparsification} (in \textbf{\textcolor{blue}{blue}}) using a symmetric stochastic block model (256 nodes, 4 communities, and intra- and inter-community connection probabilities of \smashinline{$2^{-2}$} and \smashinline{$2^{-6}$}, respectively).  
We ran both algorithms 16 times on the same initial graph. 
For each eigenvector of the original Laplacian, we compute the mean and standard deviation of its quadratic forms (with \smashinline{$\lapsubgtilde$} and with \smashinline{$\lapsubgtildeinv$}) as a function of edges remaining.  
We divide the eigenvectors into two groups: the $3$ nontrivial eigenvectors (``global structure'') and the remaining eigenvectors (``local details''), and compute the average mean and average standard deviation for each group.  
Shading denotes one (average) standard deviation about the (average) mean. 
\textit{Left}: Laplacian pseudoinverse quadratic form.
\textit{Right}: Standard Laplacian quadratic form.  
Note that the upward bias of the ``reciprocal'' quadratic form is expected for both algorithms (as \mbox{$\smashinline{\expectation{X} \leq 1/\expectation{1/X}}$} for any random variable \mbox{$\smashinline{X>0}$}).  
}
\label{fig:SpielmanComparison2}
\end{center}
\end{figure}

\subsection{Coarsening}
\label{subsec:StandardMetricComparisons_Coarsening}

Figure~\ref{fig:ClusteringComparisonQuadraticForm} replicates the results of Figure~\ref{fig:coarsening}, but uses the Laplacian pseudoinverse quadratic form to measure the reduction quality instead of our proposed hyperbolic distance.  

Figure~\ref{fig:ClusteringComparisonEigenLoukas} compares our method with that of Loukas~\cite{loukas2018graphreduction}, using the average relative error of the $k$ lowest \mbox{non-trivial} eigenvalues of the Laplacian (ie, \smashinline{$\frac{1}{k}\sum_{i=2}^{k+1} \big| \widetilde{\lambda}_i - \lambda_i \big| \big/ \lambda_i$}) to measure the reduction quality.  
\begin{figure}[H]
\begin{center}
\centerline{\includegraphics[width=1.01\linewidth]{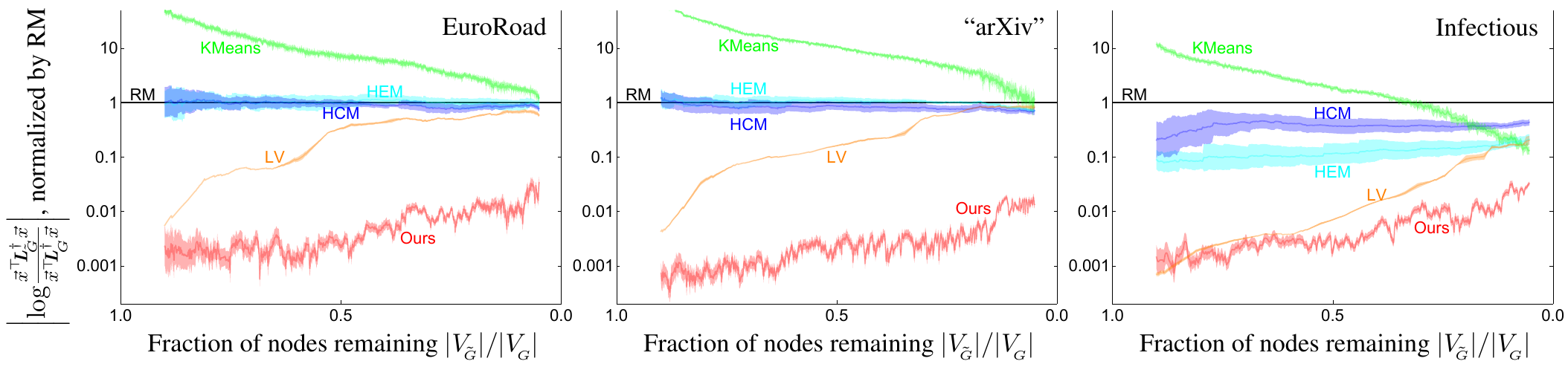}}
\caption{\small{\textbf{Our coarsening algorithm performs even better when using the quadratic form with $\protect\lapinv$}.  
Here we replicate the experiments in Figure~\ref{fig:coarsening}.  
However, instead of using our proposed hyperbolic distance, we consider the logarithm of the fractional change in the Laplacian pseudoinverse quadratic form for $\vect{x}$ the lowest \mbox{non-trivial} eigenvector of the original Laplacian: 
\mbox{$\smashinline{\left| \log \!\big(\vect{x}^\top \!\! \protect\lapsubgtildeinv \vect{x} \big/ \vect{x}^\top \!\! \protect\lapsubginv \vect{x}\big) \right|}$}.
As before, for each algorithm, we plot the mean of this quantity normalized by that obtained by random matching (\textbf{RM}).   
Shading denotes one standard deviation about the mean for 16 runs of the algorithms.  
The results are remarkably similar to those obtained using our proposed hyperbolic distance (Figure~\ref{fig:coarsening}). 
The most notable deviation is that our algorithm appears to perform \textit{better} when compared using this quadratic form. 
}
}
\label{fig:ClusteringComparisonQuadraticForm}
\end{center}
\end{figure}

\begin{figure}[H]
\begin{center}
\centerline{\includegraphics[width=1.01\linewidth]{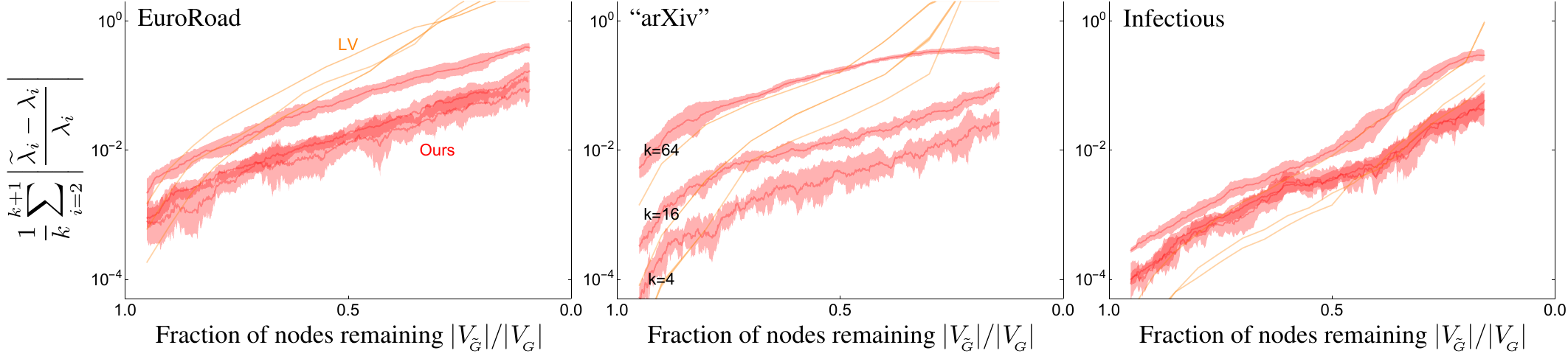}}
\caption{\small{\textbf{Our algorithm preferentially preserves the lower portion of the Laplacian spectrum}.  
We compare our coarsening algorithm (\textbf{\textcolor{red}{Ours}}) with that of Loukas \cite{loukas2018graphreduction} (\textbf{\textcolor{orange}{LV}}) using the same three datasets as in Figure~\ref{fig:coarsening}.  
We use the relative error in the $k$ lowest \mbox{non-trivial} eigenvalues of the Laplacian:
\mbox{$\smashinline{\frac{1}{k}\sum_{i=2}^{k+1} \big| \widetilde{\lambda}_i - \lambda_i \big| \big/ \lambda_i}$}, a measure of spectral similarity considered in \cite{loukas2018graphreduction}. 
Shading denotes one standard deviation about the mean for 8 runs of the algorithms.  
Note that our algorithm performs considerably better when applied to graphs with a geometric quality.  
}
}
\label{fig:ClusteringComparisonEigenLoukas}
\end{center}
\end{figure}

%

\section{Applications to graph visualization}
\label{sec:Visualization}
Data visualization is an important (and aesthetically pleasing) application of graph reduction.  
As such, we generated videos of our algorithm reducing several real-world datasets.  
Figure~\ref{fig:ReductionExample} displays several stages of our algorithm applied to a temporal social network.  
A video of this reduction can be found {\color{blue}\href{https://www.youtube.com/watch?v=qqLJclVUML8}{here}};  
an application to an airport network (a case with both geometric and scale-free aspects) can be found {\color{blue}\href{https://www.youtube.com/watch?v=tXUr6RBRaEI}{here}}; 
an application to the European road network can be found {\color{blue}\href{https://www.youtube.com/watch?v=UVhT0y4Uae0}{here}}, 
and a reduction of a ``hierarchical meta-graph'' can be found {\color{blue}\href{https://www.youtube.com/watch?v=i3u4kkxMK40}{here}}.\footnote{Explicit urls for the non-hyperlinked: 
\\{\color{blue}\url{youtube.com/watch?v=qqLJclVUML8}};  {\color{blue}\url{youtube.com/watch?v=tXUr6RBRaEI}};\\
{\color{blue}\url{youtube.com/watch?v=UVhT0y4Uae0}}; and {\color{blue}\url{youtube.com/watch?v=i3u4kkxMK40}}.
 }
\begin{figure}[h]
\begin{center}
\centerline{\includegraphics[width=0.9\linewidth]{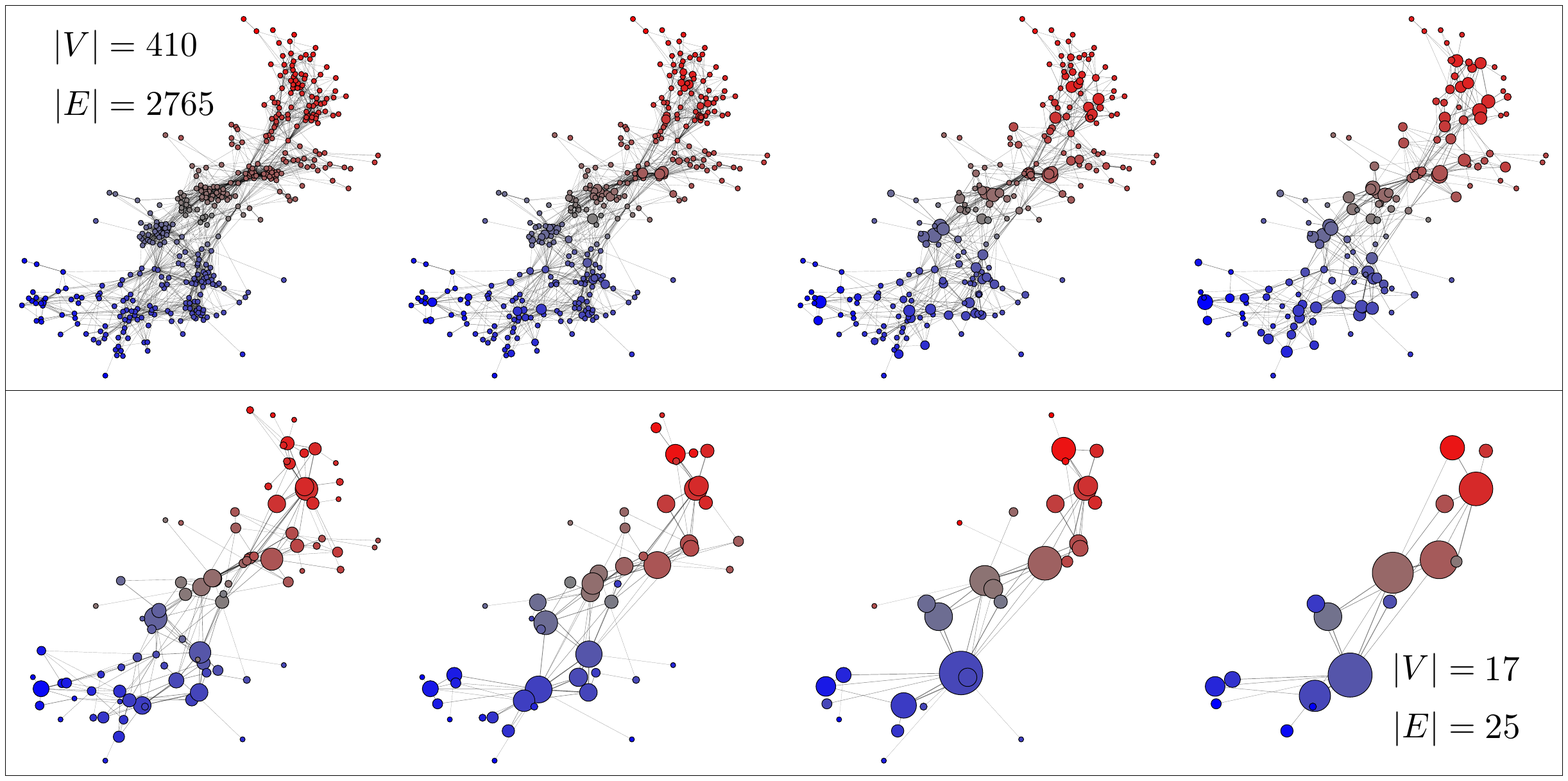}}
\caption{\small{\textbf{Visualization of our graph reduction algorithm preserving global structure}.  
We applied our algorithm (prioritizing edge reduction, and allowing for deletion, contraction, and reweighting) to a weighted social network of face-to-face interactions during an exhibition on infectious diseases, with initial edge weights proportional to the number of interactions between pairs of people (410 nodes and 2765 edges)  from \cite{isella2011whats}. 
Node color indicates the lowest nontrivial eigenvector of the reduced Laplacian, which in this case is aligned with the temporal direction.  
This graph displays a notable amount of hierarchical clustering (owing to its social nature), which is reflected in the reduced graphs.  
Eg, our algorithm begins by collapsing small, tightly-knit clusters of several people into one ``supernode'', corresponding to groups of people who visited the exhibition together.  
A video of this reduction can be found {\color{blue}\href{https://www.youtube.com/watch?v=PW8szpzatEU}{here}}.
}}
\label{fig:ReductionExample}
\end{center}
\end{figure}




\end{document}